\documentclass[runningheads]{llncs}
 
\usepackage{amsthm}
\usepackage{amsfonts}
\usepackage{amsmath}
\usepackage{hyperref}

\usepackage{amssymb}
\usepackage{macros}
\usepackage{xspace}
\usepackage{multirow}
\usepackage{subcaption}
\usepackage{diagbox}
\usepackage{bm}
\usepackage{todonotes}
\usepackage{fullpage}
\usepackage{makecell}
\usepackage{orcidlink}

\usepackage[notion,quotation,makeidx, electronic]{knowledge}

\usepackage{algpseudocode,algorithm2e}

\setlength{\parfillskip}{0pt plus 1fil}
\usepackage{tikz}
\usetikzlibrary{arrows,automata,positioning}

\bibliographystyle{plainurl}

\title{Learning Real-Time One-Counter Automata Using Polynomially Many Queries}

\titlerunning{Learning DROCAs Using Polynomially Many Queries}

\author{Prince Mathew\inst{1}\orcidlink{0000-0001-6410-1474} \and
Vincent Penelle\inst{2} \and
A.V. Sreejith\inst{3}}

\authorrunning{P. Mathew, V. Penelle, and A.V. Sreejith}

\institute{Indian Institute of Technology Goa, India.  \email{prince@iitgoa.ac.in} \and
Univ. Bordeaux, CNRS,  Bordeaux INP, LaBRI, UMR 5800, F-33400, Talence, France. \email{vincent.penelle@u-bordeaux.fr}
\and
Indian Institute of Technology Goa, India.  \email{sreejithav@iitgoa.ac.in} }

\usepackage[noabbrev]{cleveref}
\Crefname{clam}{Claim}{Claims}

\begin{document}
\knowledge{notion, index=counter-synchronous}
| counter-synchronised
| counter-synchronous
| counter-synchronicity
\knowledge{notion, index=$\sgn$}
| \sgn
\knowledge{notion, index=$\P$}
| \P
\knowledge{notion, index=$\S$}
| \S
\knowledge{notion, index=$\Memb$}
| \Memb
\knowledge{notion, index=$\Actions$}
| \Actions
\knowledge{notion, index=$d$-closed}
| $d$-closed
\knowledge{notion, index=$d$-consistent}
| $d$-consistent
\knowledge{notion, index=$\CV\upharpoonright$}
| \CV\upharpoonright
\knowledge{notion, index=row}
| row
\knowledge{notion, index=Operations}
| Operations
\knowledge{notion, index=$\widetilde\Sigma$}
| \widetilde\Sigma
\knowledge{notion, index parent key=one-counter automata, index=$\droca$}
| \droca
| \drocas
\knowledge{notion, index=\minOCA}
| \minOCA
\knowledge{notion, index=\bps}
| \bps
\knowledge{notion, index=\dsOne}
| \dsOne
\knowledge{notion, index=\dsTwo}
| \dsTwo
\knowledge{notion, index parent key=one-counter automata, index=$\voca$}
| \voca
| \vocas
| visibly one-counter automata
| visibly one-counter automaton
\knowledge{notion, index=\encodedDFA}
| \encodedDFA
\knowledge{notion, index=observation table}
| observation table
| observation tables
\knowledge{notion, index=$\simeq$}
| \simeq
| \not\simeq
\knowledge{notion, index=$\not\sim$}
| \not\sim
\knowledge{notion, index= $height$}
| height_{\Autom}
| height_{\Butom}
| height
\knowledge{notion}
| \Lang
| \Autom(w)
| \Autom(u)
| \Autom(uz)
| \Autom(vz)
| \Butom(w)
| \Autom(z)
| \Cutom(z)
\knowledge{notion, index= equivalence}
| equivalence
| equivalent
\knowledge{notion, index=minimal synchronous-equivalence query}
| minimal synchronous-equivalence queries
| minimal synchronous-equivalence query
| MSQ
\knowledge{notion, index=membership query}
| membership queries
| membership query
| membership
| MQ
\knowledge{notion, index= $\ce$}
| \ce
| counter-effect
\knowledge{notion, index=$\Enc_{\Autom}$}
| \Enc
\knowledge{notion, index=partial equivalence query}
| partial equivalence query
| partial equivalence queries
\knowledge{notion, index=counter value query}
| counter value queries
| counter value query
| counter value 
| CV

\maketitle
\begin{abstract}
In this paper, we introduce a novel method for active learning of deterministic real-time one-counter automaton ("\droca"). The existing techniques for learning a "\droca" rely on observing the behaviour of the "\droca" up to exponentially large counter values. Our algorithm eliminates this need and requires only a polynomial number of queries. {Additionally, our method differs from existing techniques as we learn a minimal "counter-synchronous" "\droca", resulting in much smaller counter-examples on equivalence queries. Learning a minimal "counter-synchronous" "\droca" cannot be done in polynomial time unless $\CF{P = NP}$, even in the case of "visibly one-counter automata".} {We use a SAT solver to overcome this difficulty.} 
The solver is used to compute a minimal separating \dfa from a given set of positive and negative samples.

We prove that the "equivalence" of two "counter-synchronous" "\drocas" can be checked significantly faster than that of general "\drocas". 
For "visibly one-counter automata", we have discovered an even faster 
algorithm for "equivalence" checking. 
We implemented the proposed learning algorithm and tested it on randomly generated "\drocas". Our evaluations show that the proposed method outperforms the existing techniques on the test set. 
\keywords{One-counter automata \and Active learning \and SAT solver.} \end{abstract}

\section{Introduction}
\label{sec:intro}
{\renewcommand{\thefootnote}{}
\footnotetext{This document contains numerous links to enhance its usability. Terms and concepts defined within the document are directly linked to their definitions as hyperlinks.}}
The problem of identifying a model from a given dataset is an area of interest in various fields of computer science, like formal verification and machine learning. 
However, inferring the right model from labelled samples is challenging. For instance, finding a minimal separating \dfa ~-- a \dfa that accepts a given set of positive samples and rejects a given set of negative samples -- is known to be $\CF{NP}$-complete~\cite{gold}. 
Angluin \cite{angluin} introduced an active learning framework called $\intro[\Lstar]{}\Lstar$, involving a learner and a teacher. The learner can ask membership and equivalence queries to the teacher. Angluin showed that \dfa can be learnt in polynomial time {with respect to the size of the minimal \dfa}.

In this paper, we are interested in active learning of a deterministic real-time one-counter automaton ("\droca"). These are finite-state machines equipped with a non-negative integer counter that can be incremented or decremented on reading an input symbol. 
The counter adds expressive power, enabling a "\droca" to recognise certain non-regular context-free languages (e.g., $\{a^nb^n\mid n>0\}$). However, this added power also introduces significant challenges in learning. 

\paragraph{Our contribution.}
We introduce the notion of "counter-synchronous" "\drocas"~-- two "\drocas" are "counter-synchronous" if, for any word, the counter value reached on reading that word is the same on both machines. 
{Given two "\drocas" with $\K$ states, }
we give an $\mathcal{O}(\alpha(\K^5)\K^5)$ time\footnote{ For all practical applications, one can consider $\alpha$ as a constant~(see \Cref{sec:equivalence}).} algorithm that solves the following two problems (see \Cref{countersync}): (1) check if {they} are "counter-synchronous", and (2) check {whether they }are "equivalent", {if they are "counter-synchronous"}.
For "visibly one-counter automata" ("\droca" where the input alphabet determines the counter-actions), we have devised an even faster $\mathcal{O}(\alpha(\K^3)\K^3)$ algorithm for checking "equivalence" (see \Cref{vocaeq}).  

The main result of this paper is a novel approach for active learning of "\drocas". However, the active learning framework differs from that introduced by Angluin in a few crucial aspects (see \Cref{LearnDroca}).
Similar to the work by Bruyère et al.~\cite{gaetan}, we use an additional query called "counter value query". This allows the learner to ask for the counter value reached on reading a word in the "\droca". Furthermore, the learner has access to a "minimal synchronous-equivalence query" on the "\droca". The teacher returns true for an equivalence query if the learnt "\droca" is "counter-synchronous" and "equivalent" to the teacher's "\droca". Otherwise, it returns a minimal word that violates this property. 

In this framework, we give an algorithm that learns a minimal "counter-synchronous" "\droca". A key innovation in our approach is the use of a SAT solver for solving the $\CF{NP}$-hard problem of finding a minimal separating \dfa from a set of positive and negative samples. The solver, in conjunction with a modified version of $\Lstar$, learns a "\encodedDFA" (see \Cref{chardfa}). Subsequently, we use this "\encodedDFA" to construct a minimal "counter-synchronous" "\droca". Our algorithm requires only a polynomial number of queries to the SAT solver and the teacher. Consequently, our algorithm is in $\CF{P^{NP}}$.

\paragraph{Justification for using a SAT solver.}
We argue that unless $\CF{P=NP}$, learning a minimal "visibly one-counter automaton" ("\voca") cannot be done in polynomial time. 
 This follows from the fact that minimisation of "\voca" can be reduced to learning a minimal "\voca". Furthermore, It was pointed out by Michaliszyn and Otop~\cite{learningVPDA} that minimising "\voca" is $\CF{NP}$-complete. This follows from the result by Gauwin et al.\cite{minVPDA} that minimising \textsc{vpda} is $\CF{NP}$-complete.

\paragraph{Comparison with existing methods.}
From a complexity theoretical perspective, "\drocas" can be learned with polynomial space and exponential time with a straightforward brute-force approach. This method entails enumerating all conceivable "\drocas", starting with a one-state "\droca", and submitting equivalence queries for each. This approach, without a doubt, entails an exponential number of equivalence queries.
All existing algorithms for learning "\drocas", including the algorithm by Fahmy and Roos~\cite{FahRoo}, the algorithm by Bruyère et al.~\cite{gaetan}, {and the algorithm for learning "\vocas" by  Neider and L{\"{o}}ding~\cite{christof}} require exponential time and an exponential number of queries with respect to the number of states of a minimal "\droca" recognising the language.  
All these algorithms share the idea of learning the initial portion of an infinite behavioural graph and then seek to identify a repetitive structure in it. However, in the worst-case scenario, this repetitive structure becomes apparent only after learning an exponentially large portion of the graph. {In this case, the learnt "\droca" will be exponentially large.} 
Consequently, learning this exponential-sized behaviour necessitates exponentially many queries. 
{Moreover, the equivalence queries also run on these exponentially large "\drocas", making it even more infeasible.} 
Bruyère et al.~\cite{gaetan} were the first to pursue a practical learning application of learning "\drocas". However,  due to the difficulty in checking the "equivalence" of "\drocas", they had to use a weaker form of equivalence query that checks for counter-examples up to some random counter value. Their "equivalence" check might say two non-equivalent "\drocas" are "equivalent" if the minimal counter-example is large. 

Our approach ("\minOCA") differs fundamentally from these existing methods by eliminating the need to observe the automaton's behaviour up to exponentially large counter values. 
{We propose an algorithm for learning "\drocas" using only a polynomial number of queries. This sets it apart from existing techniques that require exponentially many queries for learning.} 
 One significant bottleneck in learning "\drocas" is the "equivalence" test by the teacher. Given two "\drocas" with number of states less than some $\K\in\N$, the "equivalence" check takes $\mathcal{O}(\K^{26})$ time\footnote{This polynomial is derived from the algorithm given in \cite{droca}.}. 
 This is impractical for real-world applications. To mitigate this, we use the synchronous-equivalence check that runs in $\mathcal{O}(\K^5)$ time. We obtain significantly smaller counter-examples while using this "equivalence" check. 
Our equivalence queries are also on models whose size is less than or equal to a minimal "counter-synchronised" "\droca". 
 Furthermore, unlike existing techniques that learn exponentially large "\drocas", our algorithm always learns an equivalent "counter-synchronous" "\droca" with the minimal number of states. 
 
\paragraph{Experiments.}
We evaluate an implementation of our algorithm 
and compare the results obtained with the existing technique by Bruyère et al.~\cite{gaetan}. Experiments were conducted on randomly generated "\drocas" with number of states ranging from 2 to 15 and the input alphabet size varying from 2 to 5. The results indicate that the proposed method outperforms the existing one~\cite{gaetan}. 

{The remainder of this paper is organised as follows: \Cref{sec:definitions} gives the definitions and preliminaries. \Cref{sec:equivalence} gives "equivalence" results of "counter-synchronised" "\drocas". \Cref{LearnDroca} details our learning algorithm for "\drocas", \Cref{sec:implementation} covers the implementation details and presents our experimental results. Finally, \Cref{sec:conclusion} summarises our work and suggests future research directions.}

\section{Definitions and Preliminaries}
\label{sec:definitions}
For any finite set $S$, $|S|$ denotes its cardinality. Non-negative numbers are denoted by $\N$, and 
$[i,j]$ denote the interval $\{i, i+1, \ldots , j\} \subseteq \N$. 
For any $d\in\N$, the sign of $d$ (denoted by $""\sgn""(d)$) is defined as $"\sgn"(d)=0$ if $d=0$ and is $1$ otherwise.
Let $w=a_0a_1a_2\ldots a_n \in \Sigma^*$. For $j,k\in[0,n]$, with $j<k$, we use $w[j]$ to denote the letter $a_j$ and $w_{[j \cdots k]}$ to denote the factor $a_j a_{j+1}\cdots a_{k}$. 

\begin{definition}
\label{defdroca}
A deterministic real-time one-counter automaton \text{\upshape(""\droca"")} $\Autom= (Q,\Sigma, q_0,\delta_0,\delta_1,F)$, where $Q$ is a finite nonempty set of states, $\Sigma$ is the input alphabet, $q_0\in Q$ is the initial state, $\delta_0: Q \times \Sigma \to Q \times \{0,+1\}$ and $\delta_1: Q \times \Sigma \to Q \times \{0,+1,-1\}$ are the transition functions, and $F\subseteq Q$ is the set of final states.
\end{definition}

We use $|\Autom|$ to denote the size of $\Autom$, which we consider to be $|Q|$. 
A configuration $\con$ of a "\droca" is a pair $(q, n)\in Q \times \N$, where $q\in Q$ denotes the current state and $n\in\N$ is the counter value. The configuration $c_0=(q_0,0)$ is called the \emph{initial configuration} of $\Autom$. 
Let $p,q \in Q, n\in\N, e \in \{-1,0,+1\}$ and $a \in\Sigma$.
A \emph{transition} between two configurations $(p,n)$ and $(q, n+e)$ on the symbol $a$ is defined, 
if $\delta_{"\sgn"(n)}({p},a) = ({q},e)$. We use $(p,n)\xrightarrow{a}(q,n+e)$ to denote this. 
A run on a word $w=a_1\dots a_n$ from a configuration $(p_1,m_1)$ is the sequence of transitions $(p_1,m_1) \xrightarrow{a_1} (p_2,m_2) \xrightarrow{a_2} (p_3,m_3) \xrightarrow{a_3} \dots \xrightarrow{a_{n-1}} (p_n,m_n)$ where $p_i \in Q$ and $m_i$s are counter values. In this case, we denote the run by $(p_1,m_1) \xrightarrow{w} (p_n,m_n)$. 
Note that the counter values always stay non-negative, implying a decrement is not permitted from a configuration with zero counter value.

\AP
Let $q \in Q,\ n\in\N$, and $w \in\Sigma^*$ with $(q_0,0)\xrightarrow{w}(q,n)$. We use $\intro[\ce]{}\ce_\Autom(w)= n$ to denote {the counter value reached on reading $w$ from the initial configuration. We call $\ce_\Autom(w)$ the \emph{"counter-effect"} of $w$. We denote by $""height_{\Autom}""(w)$ the maximal} "counter-effect" of the prefixes of $w$ in \Autom. We drop the subscript $\Autom$ when the $\droca$ under consideration is evident. 
$\Autom$ is deterministic (resp. complete) if, for any word $w$, there is at most (resp. at least) one run on $w$ starting from any configuration. 
We say that a word $w$ is accepted by $\Autom$ if and only if $(q_0,0) \xrightarrow{w} (q_f,m)$ for some $q_f \in F$ and $m \in \N$. The language of $\Autom$, denoted by $""\Lang""(\Autom)$, is the set of all words accepted by $\Autom$.
For convenience, we use $"\Autom(w)"=1$ if $w\in\Lang(\Autom)$ and $"\Autom(w)"=0$ otherwise. Two "\drocas" $\Autom$ and $\Butom$ are ""equivalent"" if $"\Lang"(\Autom)="\Lang"(\Butom)$.

There are no $\epsilon$-transitions in a "\droca". {We consider only complete "\drocas" in this paper.  Every incomplete "\droca" can be made complete without changing its language by directing all the undefined transitions to a new non-final sink state.}

\AP
\begin{definition}
We say that two "\drocas" \Autom and \Butom are ""counter-synchronised"" if for all $w\in\Sigma^*$, \upshape$\ce_{\Autom}(w)=$\upshape$\ce_{\Butom}(w)$.
\end{definition}
  
\section{Equivalence of Counter-Synchronised DROCAs}
\label{sec:equivalence}
The "equivalence" of "\drocas" was shown to be in $\CF{NL}$ by B{\"{o}}hm and G{\"{o}}ller~\cite{droca}. They show that there exists a $\mathcal{O}(\K^{26})$ length word that distinguishes two non-equivalent "\drocas" of size $\K$. 
However, this is not suitable for practical applications. 

We give an efficient $\mathcal{O}(\K^5)$ algorithm to check the "equivalence" of "counter-synchronised" "\drocas". {If two "\drocas" are not "counter-synchronised", then our algorithm outputs the smallest word for which they reach different counter values.
If they are "counter-synchronised" and not "equivalent", then our algorithm outputs the smallest word that is accepted by one and rejected by the other. In both these cases, the length of the counter-example is $\mathcal{O}(\K^5)$. 

{In \Cref{countersync,vocaeq}, we can treat $\alpha$ as a constant since $\alpha(j) \leq 4$ for all $j<2^{1000}$ (see \cite{almeida}). Hopcroft and Karp~\cite{hopcroft} showed that "equivalence" checking of two \dfas of size $n$ can be done in $\mathcal{O}(\alpha(n)n)$ time.
\begin{theorem}
\label{countersync}
Let \Autom and \Butom be "\drocas" where $|\Autom|,|\Butom|\leq\K$. {If \Autom is not "counter-synchronised" and "equivalent" to \Butom, then there is a minimal word $w$ where $|w|\leq 2\K^5$, $"height_{\Autom}"(w),\ "height_{\Butom}"(w)\leq \K^4$, and either \upshape$\ce_{\Autom}(w)\neq$\upshape$\ce_{\Butom}(w)$ $or$ $"\Autom(w)"\neq "\Butom(w)"$.
\textit{There is an $\mathcal{O}(\alpha(\K^5)\K^5)$ time algorithm to output this word if it exists.}} 
\end{theorem}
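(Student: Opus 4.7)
The plan is to reduce the combined check for counter-synchronisation and equivalence to a single state-reachability question in a product automaton $\mathcal{C}$ on state set $Q_\Autom\times Q_\Butom$ that runs $\Autom$ and $\Butom$ in lockstep with a shared counter. I would tag a pair $(p,q)$ as \emph{accept-bad} when exactly one of $p\in F_\Autom$ and $q\in F_\Butom$ holds, and tag an outgoing transition of $(p,q)$ on a letter $a$ as \emph{counter-bad} when $\Autom$ and $\Butom$ prescribe different counter-actions for $a$ under the current counter regime (zero versus positive). By determinism, the minimal word $w$ sought by the theorem is then the shortest word on which the run of $\mathcal{C}$ either reaches an accept-bad pair, or takes a counter-bad transition as its last letter. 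Since $|\mathcal{C}|\le \K^2$, \Cref{ocareach} applied to the state pair at which the bad event is witnessed supplies a reaching run with terminal counter-value below $\K^2$ and height below $\K^4$.

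The height bound transfers verbatim to $height_\Autom(w)$ and $height_\Butom(w)$ because, by minimality of $w$, every transition of $w$ strictly before its last one is counter-synchronous, so the shared counter of $\mathcal{C}$ matches both individual counters. To sharpen the height-based estimate into the word-length bound $|w|\le 2\K^5$, I would re-run inside $\mathcal{C}$ the double-pigeonhole used in the proof of \Cref{ocareach}: on a shortest reaching run no (state-pair, counter-value) pair may appear twice in an ascending phase nor twice in a descending phase, and matched ``first-visit/last-visit'' pairs across counter-levels can be collapsed, saving one factor of $\K$ over the naive product of state-count and height.

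For the algorithmic half, I would perform a BFS from the initial configuration of $\mathcal{C}$ through configurations capped at height $\K^4$, returning the first BFS path triggering an accept-bad or counter-bad event; such a path is automatically of minimum length. A naive BFS visits $O(\K^6)$ configurations, so reaching the $\mathcal{O}(\alpha(\K^5)\K^5)$ bound requires a Hopcroft--Karp-style union-find that merges configurations already known to agree on the remainder of the search, trading the saved factor of $\K$ for an amortised $\alpha$ cost per union/find operation, exactly as in the classical \dfa-equivalence algorithm.

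The main obstacle will be justifying the sharpening from $\K^6$ to $\K^5$ in both the word-length bound and the running time. The word-length step needs an extra pigeonhole exploiting that the bad pair/transition can be chosen so that either the ascending or the descending phase terminates at the peak counter-value, collapsing it. The algorithmic step needs a union-find invariant that is simultaneously strong enough to preserve amortised $\alpha$ cost per operation and weak enough that no configuration possibly lying on a shorter witness path is ever discarded.
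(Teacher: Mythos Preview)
Your reduction to reachability of ``bad'' states/transitions in a product \droca $\mathcal{C}$ of size $\K^2$ is exactly the paper's starting point, and invoking \Cref{ocareach} on $\mathcal{C}$ to get the height bound $\K^4$ is correct and matches the paper. The observation that the two individual heights coincide with the shared counter (by minimality of $w$) is also right.

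The gap is precisely where you flag it as an ``obstacle'': the passage from $\K^6$ to $\K^5$, both for the word-length bound and for the running time. Your proposed fixes do not work as stated. For the length bound, the double pigeonhole of \Cref{ocareach} bounds \emph{height}, not length; once height is capped at $\K^4$, the no-repeated-configuration argument in the product gives only $|\mathcal{C}|\cdot\K^4=\K^6$, and ``collapsing matched first-visit/last-visit pairs'' does not shorten the path further---that trick has already been spent to obtain the height bound. For the algorithm, a union--find on \emph{product} configurations $(p,q,n)$ still has $\K^6$ elements, so the number of unions is bounded by $\K^6$, not $\K^5$.

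The paper sidesteps both difficulties with one move: after establishing the height bound $\K^4$, it does \emph{not} work in the product. Instead it unrolls $\Autom$ and $\Butom$ \emph{separately} into two \dfas (Moore machines outputting the pair (acceptance bit, counter-value)) of sizes $|\Autom|\cdot\K^4$ and $|\Butom|\cdot\K^4$, each at most $\K^5$. The question ``does some word witness $\ce_\Autom\neq\ce_\Butom$ or $\Autom(w)\neq\Butom(w)$'' is then literally equivalence of these two \dfas. Hopcroft--Karp on two \dfas of sizes $n_1,n_2$ runs in $\mathcal{O}(\alpha(n_1+n_2)(n_1+n_2))$ time and, crucially, the shortest distinguishing word has length at most $n_1+n_2$, because the union--find is over the \emph{disjoint union} of the two state sets (size $2\K^5$), not over the product (size $\K^6$). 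This single reduction delivers both the $2\K^5$ length bound and the $\mathcal{O}(\alpha(\K^5)\K^5)$ running time in one stroke, with no additional pigeonholing required.
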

\begin{proofsketch}
Let \Autom and \Butom be "counter-synchronous" but not "equivalent". The "equivalence" of \Autom and \Butom reduces to reachability in their product "\droca". From \cite{shortestpaths}, the "height" of a minimal word that reaches a configuration is $\K^4$ for a "\droca" of size $\K^2$.

Let \Autom and \Butom be not "counter-synchronous" and $w$ be a minimal word where $\ce_{\Autom}(w)\neq$\upshape$\ce_{\Butom}(w)$.
Let $w=w_1a$ for some $w_1$ and $a\in\Sigma$. Since $w$ is minimal,  \upshape$\ce_{\Autom}(w^\prime)=$\upshape$\ce_{\Butom}(w^\prime)$ for all prefixes of $w'$ of $w_1$. The synchronous run of $w_1$ can be seen as the run of a "\droca" with $\K^2$ states.
From \cite{shortestpaths}, $"height_{\Autom}"(w_1) \leq \K^4$.

In both cases, it suffices to search for words of "height" less than $\K^4$. 
We construct \dfas of size $\K^5$ corresponding to the configuration graph of \Autom and \Butom up to counter value $\K^4$. We conclude by doing a \dfa equivalence check. 
\end{proofsketch}

 A "visibly one-counter automaton" (""\voca"") is a ("\droca") where the input alphabet is $\Sigma = (\Sigma_{+}, \Sigma_{-},\Sigma_{0})$. 
 The counter value is incremented (resp.~decremented, not changed) on reading a symbol from $\Sigma_{+}$ (resp.~$\Sigma_{-}$, $\Sigma_0$). Note that "\vocas" over the same alphabet are "counter-synchronous" by definition as the word itself defines the counter value reached. 
For "\vocas", we give a faster $\mathcal{O}(\alpha(\K^3)\K^3)$ time algorithm for "equivalence" checking using ideas from \cite{wodca}.

\begin{figure}
\centering
\resizebox{.4\columnwidth}{!}{%
\begin{tikzpicture}
\tikzset{every path/.style={line width=.3mm}}
\draw plot [smooth, tension=.6] coordinates {(-2,-1)(-1.1,1)(-.3,-1) (0,0) (.3,1) (.7,2) (1.35,1) (1.64,1.5) };
\draw [dashed] plot [smooth, tension=.6] coordinates{(1.64,1.5) (2,2.5)(2.25,3.3)};
\draw plot [smooth, tension=.6] coordinates{(2.25,3.3) (3,5) (3.8,3.3)};
\draw [dashed] plot [smooth, tension=.6] coordinates{(3.8,3.3) (4.2,2.8) (4.7,3.5) (4.95,3.3)(5.2,2.5) (5.53,1.5)};
\draw (3,4.3) node {$\mata_{j}$};
\draw plot [smooth, tension=.6] coordinates{ (5.53,1.5)(5.7,1) (6,0)  (6.6,.8) }; 
\filldraw (3,5) circle[radius=1.5pt] node[anchor=north,yshift=0.6cm] {$d$};
\filldraw (-2,-1) circle[radius=1.5pt] node[anchor=south,yshift=-0.6cm] {$c_{\iota}$};
\filldraw (6.6,.8) circle[radius=1.5pt] node[anchor=south,yshift=-0.6cm] {$c_\ell$};
\filldraw (0.25,.8) circle[radius=1.5pt] node[anchor=south,yshift=-0.6cm] [xshift=-.3cm, yshift=.2cm]{$$};
\filldraw (5.75,.8) circle[radius=1.5pt] node[anchor=south,yshift=-0.6cm] [xshift=-.3cm, yshift=.2cm]{$$};
\filldraw (1.64,1.5) circle[radius=1.5pt] node[anchor=west, yshift=-.2cm] {$(\vc y_{t}, t)$}; 
\filldraw (5.53,1.5) circle[radius=1.5pt] node[anchor=west, yshift=-.2cm] {$(\vc y'_{t},t)$} node[xshift=.54cm, yshift=-1.9cm]{$\matb_t$};
\filldraw (2.25,3.3) circle[radius=1.5pt] node[anchor=south, xshift=-.4cm] {$(\vc y_{j},j)$};
\filldraw (3.8,3.3) circle[radius=1.5pt] node[anchor=south,xshift=.4cm] {$(\vc y'_{j},j)$};
\draw [line width=0.38mm] [-stealth](-2,-1) node[anchor=south, xshift=4cm, yshift=-.7cm]{word length}-- (7,-1);
\draw [gray!70] (-2,5)node[anchor=east]{$m$} -- (7,5) ;
\draw[gray!70](-2,1.5)node[anchor=east]{$t$} -- (7,1.5);
\draw[gray!70](-2,3.3)node[anchor=east]{$j$} -- (7,3.3);
\draw[gray!70](-2,.8)node[anchor=east]{$n_\ell$} -- (7,.8);
\draw [line width=0.38mm] [-stealth](-2,-1)node[anchor=south,rotate=90, xshift=3cm, yshift=.5cm] {counter value} -- (-2,5.8);\end{tikzpicture}
}
\caption{The figure shows the synchronous run of a word on two "\vocas" such that it reaches a final state in one "\voca" and a non-final state in the other. 
The dashed line denotes the part of the synchronous run that can be removed. }
\label{fig:ucut}
\end{figure}

\begin{theorem}
\label{vocaeq}
{Let \Autom and \Butom be "\vocas" where $|\Autom|,|\Butom|\leq\K$. If $\Autom$ and $\Butom$ are not "equivalent", then there is a word $w$ where $"\Autom(w)"\neq "\Butom(w)"$, $|w|\leq 4\K(\K+\K^2)$, and $"height_{\Autom}"(w)\leq 2(\K+\K^2)$.
An $\mathcal{O}(\alpha(\K^3)\K^3)$ algorithm can find this $w$ if it exists.}
\end{theorem}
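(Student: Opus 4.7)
The starting observation is a structural simplification: since $\Autom$ and $\Butom$ are both visibly one-counter automata over the same alphabet, the counter-action of each letter is fixed by the alphabet partition and is independent of the state. Therefore $\ce_\Autom(w) = \ce_\Butom(w)$ and $height_\Autom(w) = height_\Butom(w)$ for every $w \in \Sigma^*$, so the two automata are automatically counter-synchronised. In particular the statement needs only the single value $height(w)$, and while \Cref{countersync} already yields an $\mathcal{O}(\alpha(K^5) K^5)$-time algorithm, our aim here is to exploit visibility to improve the polynomial to $\mathcal{O}(\alpha(K^3) K^3)$.

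Form the product $\mathcal{C} = \Autom \times \Butom$, which has at most $K^2$ states and inherits the counter dynamics of its factors. Call a configuration $((p,q), n)$ of $\mathcal{C}$ \emph{bad} if exactly one of $p,q$ is accepting. Finding a minimal counter-example amounts to finding a shortest word whose run from the initial configuration of $\mathcal{C}$ ends in a bad configuration.

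The heart of the argument, adapting the analysis of \cite{wodca} for weighted one-counter automata, is a pumping argument that establishes $height(w) \leq 2(K + K^2)$ for a minimal witness $w$. Fix such a $w$ (minimal length, ties broken by minimal height). Because the counter trajectory depends only on the word, any two configurations on the run of $w$ in $\mathcal{C}$ sharing the same state and counter value can be spliced together; more importantly, a balanced pair consisting of an ascending excursion from some level $v$ to a higher level $v'$ and a matching descending excursion from $v'$ back to $v$ can be excised simultaneously, producing a strictly shorter valid run with the same terminal configuration and preserving counter non-negativity throughout. A Pigeonhole argument over the $K^2$ product states shows that if the run climbs too high such a balanced pair must exist, contradicting minimality; the additive $K$ in the bound accounts for a final descent from the peak down to the bad terminal configuration, where pumping within a single factor of size $K$ already suffices.

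With the height bound in hand, a per-level counting argument using the same splicing idea bounds the length by $|w| \leq 4K(K + K^2)$. Algorithmically, the minimal witness is then found by constructing the finite DFA whose states are the $\mathcal{O}(K^3)$ configurations of $\mathcal{C}$ up to height $2(K + K^2)$, together with an absorbing accepting state entered whenever the run reaches a bad configuration, and testing equivalence against the empty-language DFA via the Hopcroft--Karp algorithm, giving the claimed $\mathcal{O}(\alpha(K^3) K^3)$ running time. The main obstacle will be the pumping step: engineering a pair of loops that can be excised \emph{simultaneously} while preserving both the bad terminal configuration and counter non-negativity throughout the remaining run. It is precisely this careful interleaving of ascending and descending pumps --- a feature of the \voca-specific analysis from \cite{wodca} not available for a general \droca --- that enables the improvement from the $\mathcal{O}(K^5)$ bound of \Cref{countersync} to the sharper $\mathcal{O}(K^3)$ bound stated here.
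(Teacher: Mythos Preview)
Your setup is right --- visibility forces counter-synchrony, the product $\mathcal{C}$ has $\leq K^2$ states, and the final Hopcroft--Karp step on the $\mathcal{O}(K^3)$ configuration DFA is exactly how the paper concludes. The gap is in the height bound, and it is a real one.

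The splicing argument you describe excises a balanced pair of ascending and descending segments so as to produce a shorter run \emph{with the same terminal configuration}. To make that work you must find levels $v<v'$ where both the last-ascent product state and the first-descent product state repeat, i.e.\ you Pigeonhole over \emph{pairs} of $\mathcal{C}$-states. That gives a height bound of order $|\mathcal{C}|^2=K^4$, which is precisely the bound behind \Cref{countersync} (compare Claim~\ref{hillcut} in the proof of \Cref{ocareach}), not the $K^2$ you need here. Pigeonhole over the $K^2$ product states alone does not produce a balanced excisable pair.

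What the paper (following~\cite{wodca}) actually does is a linear-algebra argument, and crucially it does \emph{not} preserve the terminal configuration --- only its badness. Encode each product state $(p,q)$ as a $0/1$ vector $\vc x_{(p,q)}\in\{0,1\}^{|\Autom|+|\Butom|}$ and each ``hill'' segment at level $i$ (last-at-$i$ before the peak to first-at-$i$ after the peak) by its transition matrix $\mata_i$. These matrices are block-diagonal with blocks of size $|\Autom|$ and $|\Butom|$, hence live in a space of dimension $|\Autom|^2+|\Butom|^2\leq 2K^2$. If the peak exceeds $n_\ell+2K^2$ there are enough levels that some $\mata_t$ is a linear combination of higher $\mata_j$'s; since $\vc y_t\mata_t\matb_t\,\vc\eta^{\!\top}\neq 0$ (where $\vc\eta$ encodes the ``exactly one accepts'' predicate), some single summand $\vc y_t\mata_j\matb_t\,\vc\eta^{\!\top}\neq 0$ as well. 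Visibility lets you transplant the level-$j$ hill down to level $t$, yielding a shorter word that ends in a \emph{different} but still bad configuration. The bound $n_\ell\leq|\Autom|+|\Butom|$ on the final counter value is obtained by the same trick one dimension down (linear dependence among the vectors $\vc z_i$ rather than the matrices), not by ``pumping within a single factor''. It is this willingness to change the witness's endpoint, enabled by linearity of the acceptance test $\vc x\,\vc\eta^{\!\top}\neq 0$, that buys the improvement from $K^4$ to $K^2$.
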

\begin{proof}
Let $\Autom=(P,\Sigma, p_{\iota},\delta^1_0,\delta^1_1,F_1)$ and $\Butom=(Q,\Sigma, q_{\iota},\delta^2_0,\delta^2_1,F_2)$ be two non-equivalent "\vocas".  {Let $P=\{p_1,\ldots, p_{|\Autom|}\}$ and $Q=\{q_1,\ldots, q_{|\Butom|}\}$.} 
For any word $w$, $\ce_{\Autom}(w)=\ce_{\Butom}(w)$. Let $\ce(w)$ denote this value. Similary, let $"height"(w)$ denote $"height_{\Autom}"(w) = "height_{\Butom}"(w)$.
 For states $p_i \in P$ and $q_j \in Q$, we define the row vector $\vc x_{(p_i,q_j)} \in \{0,1\}^{|\Autom|+|\Butom|}$ as: $\vc x_{(p_i,q_j)}[k] = 1$ if and only if $k=i$ or $k= |\Autom|+j$. We also define the row vector $\vc\eta \in \{-1,0,1\}^{|\Autom|+|\Butom|}$ where $\vc \eta[k] = 1$ if $k \leq |\Autom|$ and $p_k \in F_1$, $\vc \eta[k] = -1$, if $k > |\Autom|$ and $q_{k-|\Autom|} \in F_2$, and $\vc \eta[k]=0$ otherwise.
Therefore, $\vc x_{(p, q)}\trans {\vc \eta} \neq 0$, if exactly one among $p$ and $q$ is a final state. 

We consider the synchronous run of $\Autom$ and $\Butom$. A configuration pair of this synchronous run is $(\vc x_{(p,q)}, n)$ where $p \in P, q \in Q$, and $n \in \N$.
The initial configuration pair is $c_{\iota} = (\vc x_{(p_{\iota},q_{\iota})}, 0)$. Given two configuration pairs $c_1=(\vc x_{(p,q)},n)$ and $c_2=(\vc x_{(p',q')},m)$, $c_1 \xrightarrow{u} c_2$ denotes
$(p,n) \xrightarrow{u} (p',m)$ and $(q,n) \xrightarrow{u} (q',m)$. \final{The transition matrix on $u$ from $c_1$ is $\matm\in\{0,1\}^{(|\Autom|+|\Butom|)^2}$ where $\matm[i,j]=1$ if and only if $(p_i,n) \xrightarrow{u} (p_j,m)$ or $(q_{i-|\Autom|},n)\xrightarrow{u}(q_{j-|\Autom|},m)$. 
Hence, $\vc x_{(p,q)} \matm = \vc x_{(p',q')}$. \Cref{matrix1,matrix2,matrixDep} follow.}

\begin{clam}
\label{matrix1}
For any $p,p' \in P$, $q,q' \in Q$ and word $w$, the transition matrix of $w$ from $(\vc x_{(p,q)},n)$ is the same as the transition matrix of $w$ from $(\vc x_{(p',q')},n)$.
\end{clam}
\begin{clam}
\label{matrix2}
Let $w$ be such that $\ce(w') > 0$ for all prefixes $w'$ of $w$. Then, the transition matrix of $w$ from $(\vc x_{(p,q)},n)$ is the same as the transition matrix of $w$ from $(\vc x_{(p',q')},m)$ for any $p,p' \in P$ and $q,q' \in Q$ and $m,n >0$.
\end{clam}

\begin{clam}
Any set of $|\Autom|^2+|\Butom|^2+1$ transition matrices is linearly dependent.
\label{matrixDep}
\end{clam}
Let $w$ be a minimal word such that $"\Autom(w)"\neq "\Butom(w)"$. Let $c_{\ell}=(\vc x_{\ell},n_{\ell})$ be such that $c_{\iota} \xrightarrow{w} c_{\ell}$. Hence, $\vc x_{\ell} \trans{\vc \eta} \neq 0$. {Note that there is no configuration pair that repeats during the run of $w$. Otherwise, we can remove the run between the repetitions to get a shorter word. This will contradict the minimality of $w$.}

\begin{clam}
\label{heightboundVoca}
$"height"(w)\leq n_\ell+|\Autom|^2+|\Butom|^2$.
\end{clam}
\begin{clamproof}
Assume for contradiction that $"height"(w)=m$ and $m> n_\ell + |\Autom|^2+|\Butom|^2$. There is a factorisation of $w = w_1w_2$ such that $c_{\iota}\xrightarrow{w_1} (\vec z,m) \xrightarrow{w_2}c_{\ell}$. 
 For an $i \in [n_{\ell},m]$, let $(\vc y_i,i)$ (resp. $(\vc y_i',i)$) be the configuration pair where the counter value $i$ is encountered for the last (resp. first) time during the run of $w_1$ (resp. $w_2$). Let $\mata_i$ and $\matb_i$ denote the transition matrices where $\vc y_i \mata_i= \vc y_i'$ and $\vc y_i' \matb_i= \vc x_{\ell}$. Let $w = x_iy_iz_i$ where  $c_{\iota}\xrightarrow{x_i} (\vc y_i,i) \xrightarrow{y_i} (\vc y_i',i) \xrightarrow{z_i}c_{\ell}$. See \Cref{fig:ucut}.
Consider the matrices $\mata_{m-1},\mata_{m-2},\ldots, \mata_{n_\ell}$ in order. From \Cref{matrixDep}, there exists $t \in [n_\ell, m-1]$ where $\mata_t$ is linearly dependent on $\mata_{m-1},\cdots,\mata_{t+1}$. Since $\vc y_t \mata_t \matb_t \trans{\vc \eta} \neq 0$, $\vc y_t (r_{t+1}\mata_{t+1}+\cdots+r_{m-1}\mata_{m-1} )\matb_t \trans{\vc \eta} \neq 0$ for integers $r_{t+1},\ldots,r_{m-1}$. Hence there is a $j > t$ where $\vc y_t \mata_j \matb_t \trans{\vc \eta} \neq 0$. 
We show $x_ty_jz_t$ is accepted exactly by one of $\Autom$ and $\Butom$ contradicting minimality of $w$. It suffices to show $c_{\iota} \xrightarrow{x_t} (\vc y_t,t) \xrightarrow{y_j} (\vc y_t \mata_j,t) \xrightarrow{z_t}  (\vc y_t \mata_j \matb_t,n_{\ell})$. From \Cref{matrix2}, the transition matrix of $y_j$ from $(\vc y_t,t)$ is $\mata_j$ and from \Cref{matrix1}, the transition matrix of $z_t$ from $(\vc y_t\mata_j, t)$ is $\matb_t$.
\end{clamproof}

\begin{clam}
$n_\ell\leq|\Autom|+|\Butom|$.
\label{lastcounterbound}
\end{clam}
\begin{clamproof}
Assume for contradiction that $n_\ell>|\Autom|+|\Butom|$. For $i\in[1,n_\ell]$, let $(\vc z_i,i)$ denote the configuration pair where counter value $i$ is encountered the last time during the run of $w$. 
Let $w=x_iz_i$ where  $c_{\iota}\xrightarrow{x_i}(\vc z_i,i)\xrightarrow{z_i}c_{\ell}$
 and $\matc_i$ be such that $\vc z_{i} \matc_i= \vc x_{\ell}$. 
From the fundamental theorem of vector spaces, there exists 
$t\leq (|\Autom|+|\Butom|)+1$ such that $\vc z_t$ is a linear combination of $\vc z_1, \ldots, \vc z_{t-1}$. Since $\vc z_t \matc_t \trans{\vc \eta} \neq 0$, there exists $j<t$ such that $\vc z_{j} \matc_t \trans{\vc \eta} \neq 0$.  From \Cref{matrix2}, the transition matrix of $z_t$ from $(\vc z_j,j)$ is $\matc_t$. 
The word $x_jz_t$ contradicts the minimality of $w$. 
\end{clamproof}
Let $\K\in\N$ such that $|\Autom|,|\Butom|<\K$.
From Claim \ref{heightboundVoca} and \Cref{lastcounterbound}, we get that $"height"(w)\leq 2(\K+\K^2)$. Similar to that in \Cref{countersync}, we reduce the problem of finding the distinguishing word to equivalence check of \dfas of size $\mathcal{O}(\K^3)$.
\end{proof}

\section{Learning DROCAs}
\label{LearnDroca}

In this section, we give an $\Lstar$-like algorithm for active learning of "\drocas" using polynomially many queries with the help of a SAT solver. In this framework, we have a $learner$ and a $teacher$. The learner aims to construct a "\droca" that recognises the same language as the teacher's "\droca" (call it $\Autom$). The teacher answers the following types of queries by the learner.
\begin{itemize}
\item \emph{""membership queries""} \textsf{"MQ"}$_\Autom$: the learner provides a  word $w\in\Sigma^*$. The teacher returns $1$ if  $w\in"\Lang"(\Autom)$, and $0$ if $w\not\in"\Lang"(\Autom)$.
\item \emph{""counter value queries""} \textsf{"CV"}$_\Autom$: the learner asks  the counter value reached on reading word $w$. The teacher returns the counter value. i.e., $\ce_{\Autom}(w)$.
\item \emph{""minimal synchronous-equivalence queries""} \textsf{"MSQ"}$_\Autom$: the learner asks whether a "\droca" $\Cutom$ is "equivalent" {and "counter-synchronous"} to $\Autom$. The teacher returns \emph{yes} if $\Cutom$ and $\Autom$ are "counter-synchronous" and "equivalent". Otherwise, the teacher provides a minimal counter-example {$z \in \Sigma^*$} such that $"\Cutom(z)" \neq "\Autom(z)"$ or $\ce_{\Autom}(z) \neq \ce_{\Cutom}(z)$. 
\end{itemize}

The requirement for the teacher to return a counter-example $z$ such that \final{the counter values reached on reading $z$ in the teachers and learners "\drocas" are different} can be removed. But in that case, we have to use the "equivalence" check for "\drocas" (not necessarily "counter-synchronous").
{The number of queries needed will be polynomial in the length of the minimal counter-example that distinguishes two "\drocas". }
 Including this additional condition allows the use of the "counter-synchronous" "equivalence" check (see \Cref{countersync}), which is much faster.
The assumption of the teacher returning a minimal counter-example can be removed if we use "partial-equivalence queries". A ""partial-equivalence query"" takes two "\drocas" and a limit as inputs and determines whether the two "\drocas" are "equivalent" for all words whose length does not exceed the limit. If they are not, then it returns a distinguishing word whose length is less than the limit.
Neider and L{\"{o}}ding~\cite{christof} and Bruyère et al.~\cite{gaetan} used "partial-equivalence queries" to learn the behaviour of "\drocas" up to a certain counter value. 
One can simulate the teacher returning a minimal counter-example by a polynomial number of "partial-equivalence queries". 
Similarly, a "partial-equivalence query" can be replaced with an equivalence query that returns a minimal counter-example. 

\begin{table}[]
\centering\scalebox{1}{
\begin{tabular}{l|c|c|cc|cc|}
                     & \multirow{2}{*}{} & \multirow{2}{*}{\ce} & \multicolumn{2}{c|}{$\epsilon$}       &\multicolumn{2}{c|}{$a$}                    \\ \cline{4-7} 
                     &                   &                                & \multicolumn{1}{c|}{\Memb} & $\Actions$  & \multicolumn{1}{c|}{\Memb} & $\Actions$     \\ \Xhline{1pt} 
\multirow{5}{*}{\rotatebox{90}{$\P$}}   & $\epsilon$        & 0                              & \multicolumn{1}{c|}{0}          & $(0,+1,+1)$ &  \multicolumn{1}{c|}{0} & $(1,+1,-1)$\\ \cline{2-7} 
                     & a                 & 1                              & \multicolumn{1}{c|}{0}          & $(1,+1,-1)$ &  \multicolumn{1}{c|}{0} & $(1,+1,-1)$\\ \cline{2-7} 
                     & ab                & 0                              & \multicolumn{1}{c|}{0}          & $(0,0,+1)$ &  \multicolumn{1}{c|}{1} & $(0,+1,+1)$\\ \cline{2-7} 
                     & aba               & 0                              & \multicolumn{1}{c|}{1}          & $(0,+1,+1)$ &  \multicolumn{1}{c|}{0} & $(1,+1,+1)$\\
                     \cline{2-7} 
                     & b                 & 1                              & \multicolumn{1}{c|}{0}          & $(1,+1,+1)$ &  \multicolumn{1}{c|}{0} & $(1,+1,+1)$\\ \cline{2-7} 

                      \Xhline{1pt}
\multirow{5}{*}{\rotatebox{90}{$\P\Sigma$}}                      & aa                & 2                              & \multicolumn{1}{c|}{0}          & $(1,+1,-1)$ &  \multicolumn{1}{c|}{0} &$(1,+1,-1)$\\ \cline{2-7} 
                     & abb               & 1                              & \multicolumn{1}{c|}{0}          & $(1,+1,+1)$ &  \multicolumn{1}{c|}{0} &$(1,+1,+1)$\\ \cline{2-7} 
                     & abaa              & 1                              & \multicolumn{1}{c|}{0}          & $(1,+1,+1)$ &  \multicolumn{1}{c|}{0} &$(1,+1,+1)$\\ \cline{2-7} 
                     & abab              & 1                              & \multicolumn{1}{c|}{0}          & $(1,+1,+1)$ &  \multicolumn{1}{c|}{0} &$(1,+1,+1)$\\ \cline{2-7} 
                     & ba                 & 2                              & \multicolumn{1}{c|}{0}          & $(1,+1,+1)$ &  \multicolumn{1}{c|}{0} & $(1,+1,+1)$\\ \cline{2-7} 
		& bb                 & 2                              & \multicolumn{1}{c|}{0}          & $(1,+1,+1)$ &  \multicolumn{1}{c|}{0} & $(1,+1,+1)$\\ 
                     \hline
\end{tabular}
}
\caption{An "observation table" of the "\droca" given in \Cref{vocaEx} recognising the language $\{a^nb^na \mid n>0\}$. Here, $\P=\{\epsilon, a, ab,aba\}$ and $\S=\{\epsilon,a\}$.} 
\label{obtable}
\end{table}
\subsection{Observation Table}
Our algorithm maintains an ""observation table"" $C$ over the input alphabet $\Sigma=\{\sigma_1,\sigma_2,\ldots, \sigma_k\}$ for some $k\in\N$. 
$C=(\P,\S,\Memb, "\CV\upharpoonright_{\P\cup\P\Sigma}@\CV\upharpoonright", \Actions)$, where 
$\intro[\P]{}\P \subseteq \Sigma^*$ is a nonempty prefix-closed set of strings, $\intro[\S]{}\S\subseteq \Sigma^*$ is a nonempty suffix-closed set of strings,  $\intro[\Memb]{}\Memb: (\P \cup \P \Sigma)  \S \to \{0, 1\}$ \final{is a function that indicates whether words belong to the language}, $""\CV\upharpoonright_{\P\cup\P\Sigma}@\CV\upharpoonright"": \P\cup\P\Sigma \to \N$ is the function $\ce$ with domain restricted to the set $\P\cup\P\Sigma$, and $\Actions: (\P \cup \P \Sigma)  \S \to \{0,1\}\times\{0,1,-1\}^k$ \final{is a function representing the sign of the counter value reached and the counter-actions on every letter after reading a word}. 
Given $w\in (\P\cup \P  \Sigma)  \S$, $\Memb(w)$ is equal to $0$ (resp. $1$) if $"\Autom(w)"$ is equal to $0$ (resp. $1$), and $\intro[\Actions]{}\Actions(w)= ("\sgn"(\ce(w)), \ce(w \sigma_1)-\ce(w), \ldots, \ce(w \sigma_k)-\ce(w))$. 
Given $w_1,w_2\in(\P \cup \P \Sigma)  \S$, we say that $\Actions(w_1)$ is not similar to $\Actions(w_2)$ if and only if $"\sgn"(\ce(w_1))= "\sgn"(\ce(w_2))$ and $\Actions(w_1)\neq\Actions(w_2)$. {We use $\Actions(w_1)""\not\sim""\Actions(w_2)$ to denote this.} 
The "observation table" initially has $\P=\S=\{\epsilon\}$ and is augmented as the algorithm runs.

An "observation table" can be viewed as a two-dimensional array with rows labelled with elements of $\P \cup \P \Sigma$, columns labelled by elements of $S$ and an additional column labelled $\ce$. The column $\ce$ contains the counter value reached on reading the word labelling a row (see \Cref{obtable}). 
For any $p \in \P \cup \P \Sigma$ and $s\in \S$, the entry in row $p$ and column $s$ is equal to $(\Memb(p s), \Actions(p s))$ and $\cell{p}$ denotes the finite function $f_p$ from $\S$ to $\{0,1\} \times \{0,1\}\times\{0,1,-1\}^k$ 
 defined by $f_p(s)= (\Memb(p  s), \Actions(p s))$.  
 \AP 
We use $""row""(p)$ to denote $(\CV(p), \cell{p})$. For $p,p^\prime \in \P \cup \P \Sigma$, we say $"row"(p)$ is equal to $"row"(p^\prime)$ (denoted by $"row"(p)= "row"(p^\prime)$), if $\cell{p}=\cell{p^\prime}$ and $\CV(p)=\CV(p^\prime)$. 

Now, we introduce the notion of "$d$-closed" and "$d$-consistent" "observation tables" for any $d\in\N$. This is similar to the notion of closed and consistency used by Angluin~\cite{angluin}, but it also takes into account the counter values. 
\begin{definition}
Let $(\P,\S,\Memb,"\CV\upharpoonright_{\P\cup\P\Sigma}@\CV\upharpoonright",Actions)$ be an "observation table" and $d\in\N$. 
\begin{enumerate}
\item The "observation table" is not ""$d$-closed"" if there exist $p\in \P$ and $a\in\Sigma$ such that $\CV(p a)\leq d$ and for all $p^\prime \in \P$, $"row"(p a) \neq "row"(p^\prime)$. 
\item The "observation table" is not ""$d$-consistent"" if there exist $p,q\in \P$ and $a\in\Sigma$ such that $\CV(p)=\CV(q) \leq d$, $"row"(p)= "row"(q)$, and $"row"(p a)\neq "row"(q a)$. 
\end{enumerate}
\end{definition}

Consider the "observation table" given in \Cref{obtable}. This table is $1$-closed but not $2$-closed. This is because of the presence of the words $aa, ba$ and $bb$ in $\P\Sigma$. 
The given table is trivially $1$-consistent as there are no equal rows in $\P$.

\subsection{Constructing a DROCA from an Observation Table}
We introduce the notion of a "\encodedDFA". 
Given an alphabet $\Sigma$, we define the modified alphabet $""\widetilde\Sigma"" = \bigcup_{a\in\Sigma}\{ a^0, a^1\}$.
For a "\droca" \Autom, we define a function $\intro[\Enc]{}\Enc_{\Autom}: \Sigma^* \to "\widetilde\Sigma"^*$ as follows:
For $w\in \Sigma^+$, $\Enc_{\Autom}(w)= \tilde w$, such that for all $i\in [0,|w|-1]$, $\tilde w[i] = w[i]^{"\sgn"(\ce_{\Autom}({w_{[0\cdots i-1]}}))}$. Also $\Enc_{\Autom}(\epsilon)= \epsilon$. 
\begin{definition}
\label{chardfa}
Let $\Autom= (Q, \Sigma, q_0, \delta_0, \delta_1, F)$ be a "\droca".
The ""\encodedDFA"" $\Dutom_{\Autom}$ of $\Autom$ over the modified alphabet $"\widetilde\Sigma"$ is $\Dutom_{\Autom}= (Q, "\widetilde\Sigma", q_0, \delta, F)$ where, for all $q\in Q$ and $a\in\Sigma$,  
$\delta(q, a^0) =p$ (resp. $\delta(q, a^1)=p$) if and only if  $\delta_0(q, a)= (p,c)$ (resp. $\delta_1(q, a)=(p,c)$) for some $p\in Q$ and $c\in\{0,1,-1\}$.
\end{definition}
\Cref{underlyingDFA} shows the "\encodedDFA" over the modified alphabet $"\widetilde\Sigma"$ corresponding to the "\droca"  given in \Cref{vocaEx}. Note that in a "\encodedDFA", the counter information is not present. If we have access to the counter-actions, then we can construct a "\droca" from this "\encodedDFA". 

\begin{figure}[!h]
  {
   \begin{subfigure}[b]{0.45\textwidth}   
  \centering\scalebox{1}{
\begin{tikzpicture}[shorten >=1pt,node distance=3cm,on grid,auto]
\tikzset{every path/.style={line width=.4mm}}\
\node[state] at (-1,0) (q_0) {$q_0$};
\node[state] at (3.5,0) (q_1) {$q_1$};
\node[state,accepting] at (3.5,-2.5)(q_2){$q_2$};
\node[state] at (-1,-2.5)(q_3){$q_3$};
\path[->]
(q_0) edge [loop above] node {$a_{=0}/{\small+1},\ a_{>0}/{\small+}1$} ()
edge [below] node[xshift=-1.1cm,yshift=.3cm]  {$b_{=0}/{\small+1}$} (q_3)
edge [above] node[xshift=.1cm,yshift=.2cm] {$b_{>0}/{\small-1}$} (q_1)
(q_1) edge [loop above] node {$b_{>0}/{\small-1}$} ()
edge [above] node[xshift=.9cm, yshift=-.2cm] {$a_{=0}/{\small0}$} (q_2)
edge [above] node[xshift=-.3cm, yshift=0cm] {$a_{>0}/{\small+1}$} (q_3)
(q_2) edge [above] node[xshift=.2cm]{$a_{=0}/{\small+1},\ a_{>0}/{\small+}1$} (q_3)
(q_3) edge [loop below] node {$a_{=0}/{\small+1},\ a_{>0}/{\small+}1$} ();
\draw (1.5,-2.8) node {$b_{=0}/{\small+1},\ b_{>0}/{\small+}1$};
\draw (2.2,-1.4) node {$b_{=0}/{\small+}1$};
\draw (-1,-4.2) node {$b_{=0}/{\small+1},\ b_{>0}/{\small+}1$};
\draw[<-] (-1.53,0)-- (-2.05,0);
\end{tikzpicture}}
\caption{\centering\droca.}
\label{vocaEx}
\end{subfigure}
}
\hfill
  {
   \begin{subfigure}[b]{0.45\textwidth}   
  \centering\scalebox{1}{
\begin{tikzpicture}[shorten >=1pt,node distance=3cm,on grid,auto]
\tikzset{every path/.style={line width=.4mm}}\

\node[state] at (-1,0) (q_0) {$q_0$};
\node[state] at (3.5,0) (q_1) {$q_1$};
\node[state,accepting] at (3.5,-2.5)(q_2){$q_2$};
\node[state] at (-1,-2.5)(q_3){$q_3$};
\path[->]
(q_0) edge [loop above] node {$a^0, a^1$} ()
edge [below] node[xshift=-.2cm,yshift=.3cm] {$b^0$} (q_3)
edge [above] node[xshift=.1cm,yshift=.2cm] {$b^1$} (q_1)
(q_1) edge [loop above] node {$b^1$} ()
edge [below] node[xshift=.3cm, yshift=.3cm] {$a^0$} (q_2)
edge [below]  node[xshift=1cm, yshift=.3cm] {$a^1,b^0$} (q_3)
(q_2) edge [below] node[xshift=.1cm,yshift=0cm] {$a^0,a^1,b^0, b^1$} (q_3)
(q_3) edge [loop below] node {$a^0, a^1,b^0,b^1$} (); 
\draw[<-] (-1.53,0)-- (-2.05,0);
\draw[gray!5] [-] (0,-4.5)-- (0,-4.5);
\end{tikzpicture}}
\caption{\centering Characteristic \dfa.}
\label{underlyingDFA}
\end{subfigure}
}
\caption{A "\droca" recognising the language $\{a^nb^na \mid n>0\}$ is given in \Cref{vocaEx}. The "\encodedDFA" corresponding to this "\droca" is shown in \Cref{underlyingDFA}.} 
\label{DrocaDfa}
\end{figure}
\subsubsection{Constructing a Characteristic DFA Using a SAT Solver:}
\label{constructAutomaton}

\begin{algorithm}
\SetKwFunction{proc}{$\texttt{ConstructAutomaton}$}{}{}%
\SetKwProg{myproc}{Procedure}{}{end}
\SetKwInOut{Input}{Input}
\SetKwInOut{Output}{Output}
\SetKwFunction{connect}{Connect}
\SetKwData{ce}{ce}
\myproc{\proc{}}{
    \Input{Observation table $C=(\P,\S,\Memb,\CV\upharpoonright_{\P\cup\P\Sigma},\Actions)$}
    \Output{\encodedDFA $\Dutom_C$}
    \BlankLine
    Initialise $Pos=\emptyset,\ Neg=\emptyset$, $Operations=\{Actions(w)\mid w=p.s \text{ for some } p\in\P \text{ and }s\in\S\}$.\\
        \ForEach{$p$ in $\P$}{
        \ForEach{$s$ in $\S$}{
               \textbf{if} $\Memb(p s)=1$  \textbf{then} add $\Enc(p s)$ to $Pos$.\\
             \textbf{else}   
                add $Enc(p s)$ to $Neg$.\\
               add $\Enc(p s)\cdot \Actions(p s)$ to $Pos$.\\
             \ForEach{$op \in Operations$}{
            	\textbf{if} $\Actions(p s)\not\sim op$ \textbf{then} add $\Enc(p s)\cdot op$ to $Neg$.\\
		
            }

        }
    }
   $\Butom$= \texttt{find\_dfa}$(Pos,Neg)$.\\
   Remove transitions on $Operations$ from $\Butom$ to obtain a \dfa  $\Dutom_C$.\\
   \textbf{return}  $\Dutom_C$.
}
\caption{Algorithm to construct a \dfa from an observation table.}
\label{constructDFA}
\end{algorithm}

Let $C=(\P,\S,\Memb,"\CV\upharpoonright_{\P\cup\P\Sigma}@\CV\upharpoonright",\Actions)$ be an "observation table". 
We give $C$ as input to a procedure $\texttt{ConstructAutomaton}$ (see \Cref{constructDFA}) and obtain a \dfa $\Dutom_C$ over the modified alphabet $"\widetilde\Sigma"$ as output. The \dfa $\Dutom_C$ satisfies the following two properties: 
\begin{enumerate}
\item \label{cond1}for all $p\in\P\cup\P\Sigma$ and $s\in\S$, $\Enc_{\Autom}(ps)\in"\Lang"(\Dutom_C)$ if and only if $\Memb(ps)=1$.
\item \label{cond2}for any $p_1,p_2\in\P\cup\P\Sigma$ and $s_1,s_2\in\S$, if the run on $\Enc_{\Autom}(p_1s_1)$ and $\Enc_{\Autom}(p_2s_2)$ reaches the same state in $\Dutom_C$ then $\Actions(p_1s_1)$ is similar to $\Actions(p_2s_2)$. 
\end{enumerate}

We outline the operations performed by the function $\texttt{ConstructAutomaton}$. 
First, we create two sets of words, $Pos$ and $Neg$. For any $p\in\P\cup\P\Sigma$ and $s\in\S$, we add $\Enc_{\Autom}(ps)$ to $Pos$ (resp. $Neg$) if and only if $\Memb(ps)=1$ (resp. $0$). 
The \dfa $\Dutom_C$ will accept all words in $Pos$ and reject all words in $Neg$. 
This ensures condition \ref{cond1}. Observe that not all words over $"\widetilde\Sigma"$ correspond to encodings of words over $\Sigma$. 
To ensure condition \ref{cond2}, we add words over a larger alphabet $"\widetilde\Sigma" \cup "Operations"$, 
\final{where $""Operations""=\{Actions(w)\mid w=ps \text{ for some } p\in\P \text{ and }s\in\S\}$. 
For any $p\in\P\cup\P\Sigma$ and $s\in\S$, we add $\Enc_{\Autom}(p s)\cdot \Actions(p s)$ to $Pos$ and for all $op\in "Operations"$ where $\Actions(p s)"\not\sim" op$, we add $\Enc_{\Autom}(p s)\cdot op$ to $Neg$.} 
We find the "minimal separating \dfa" for the sets $Pos$ and $Neg$ using the function \texttt{find\_dfa}. We remove the transitions labelled by the letters from $"Operations"$ in this "minimal separating \dfa" to obtain $\Dutom_C$. 

Given an "observation table", the sets $Pos$ and $Neg$ can be constructed in polynomial time. {Every state of $\Autom$ can add two elements to the set $"Operations"$ -- one corresponding to the counter-actions on reading letters from counter value zero and the other for reading letters from positive counter value. Hence, the cardinality of this set is at most $2|\Autom|$.} 
\final{In our implementation, we use the algorithm by Dell'Erba et al.~\cite{dfaminer} that uses a SAT solver to find a minimal separating \dfa.}
However, the function \texttt{find\_dfa} can be any algorithm {that finds a minimal separating \dfa}~\cite{heule,Leucker,neidersat}. 
In the next subsection, we observe that $\Dutom_C$ is a "\encodedDFA".

\subsubsection{Constructing a DROCA from a Characteristic DFA: }
Let the input alphabet be $\Sigma = \{\sigma_1,\ldots,\sigma_k\}$ and $C=(\P, \S, \Memb,"\CV\upharpoonright_{\P\cup\P\Sigma}@\CV\upharpoonright",\Actions)$ be an "observation table" over $\Sigma$. The following lemma states that we can construct a "\droca" $\Butom_C$ from $C$ such that it agrees with the "observation table" $C$.  The idea is to use the "observation table" to assign counter-actions to transitions of $\Dutom_C$. 
\begin{lemma}
\label{lem:constructDROCA}
Given an "observation table" $C$, {we can construct} a "\droca" $\Butom_C$ with $|\Butom_C|\leq|\Autom|$ such that for all $p\in\P\cup\P\Sigma$ and $s\in\S$, $\Butom_C(ps)=\Memb(ps)$
and $\ce_{\Butom_C}(ps)=\ce_{\Autom}(ps)$. 
\end{lemma}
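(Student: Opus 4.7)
The plan is to construct $\Butom_C$ in two phases. First, apply \Cref{constructDFA} to the table $C$ to obtain the \encodedDFA $\Dutom_C=(Q_C,\widetilde\Sigma,q_0^C,\delta^C,F_C)$, which satisfies properties \ref{cond1} and \ref{cond2}. Second, lift $\Dutom_C$ to a \droca $\Butom_C=(Q_C,\Sigma,q_0^C,\delta_0^C,\delta_1^C,F_C)$ on the same state set by reading counter-actions off the table: for each $q\in Q_C$, each $\sigma\in\Sigma$, and each sign bit $s\in\{0,1\}$, if some word $w=ps\in(\P\cup\P\Sigma)\S$ with $\sgn(\ce_\Autom(w))=s$ has $\Enc_\Autom(w)$ ending at $q$ in $\Dutom_C$, set the target of $\delta_s^C(q,\sigma)$ to $\delta^C(q,\sigma^s)$ and its counter-change to the $\sigma$-coordinate of $\Actions(w)$. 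If no such $w$ exists, fix $\delta_s^C(q,\sigma)$ arbitrarily. Well-definedness of these counter-actions is immediate from property \ref{cond2}: two representatives of $q$ with the same sign must have similar $\Actions$-tuples, and equal sign upgrades $\sim$ to outright equality.

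Correctness of membership and counter-values is established by induction on the length of a prefix of $ps$. The induction invariant states that after reading any prefix $w$ of $ps$ in $\Butom_C$, the state reached coincides with the state $\Dutom_C$ reaches on $\Enc_\Autom(w)$, and the counter-value equals $\ce_\Autom(w)$. The state-tracking step follows because $\Butom_C$ directly inherits $\Dutom_C$'s transition structure. The counter-tracking step requires that the counter-increment chosen for $\delta^C_{\sgn(\ce_\Autom(w))}(q,\sigma)$ equals $\ce_\Autom(w\sigma)-\ce_\Autom(w)$. Once the induction is complete, $\Butom_C(ps)=\Memb(ps)$ follows from property \ref{cond1}, and $\ce_{\Butom_C}(ps)=\ce_\Autom(ps)$ follows from the counter tracking at $w=ps$.

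The main obstacle I anticipate is the counter-tracking step when the prefix $w$ is not itself of the form $p's'$ with $p'\in\P\cup\P\Sigma$ and $s'\in\S$, so $\Actions(w)$ is not directly in the table. The workaround is to compare, via property \ref{cond2}, the representative used when defining $\delta_{\sgn(\ce_\Autom(w))}^C(q,\sigma)$ against any witness in $(\P\cup\P\Sigma)\S$ reaching $q$ with the same sign; by this argument the counter-change carried by the representative's $\Actions$-tuple on $\sigma$ is forced to equal $\ce_\Autom(w\sigma)-\ce_\Autom(w)$.

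For the size bound $|\Butom_C|\leq|\Autom|$, I would exhibit a separating DFA of at most $|\Autom|$ states for the sets $Pos$, $Neg$ built inside \Cref{constructDFA}: take $\Dutom_\Autom$, the \encodedDFA of $\Autom$, and extend it on the $Operations$ alphabet by routing, at each state $q$, the $Operations$-letter matching $\Autom$'s actual action-tuple at $q$ to an existing final state, and every non-similar $Operations$-letter to an existing non-final state. Completeness of $\Autom$ (which always contains a non-final sink state) ensures both kinds of states exist. This extension separates $Pos$ from $Neg$ without introducing new states, and since \texttt{find\_dfa} returns a minimal separating DFA and stripping $Operations$-transitions does not alter the state count, we conclude $|\Butom_C|=|\Dutom_C|\leq|\Dutom_\Autom|=|\Autom|$.
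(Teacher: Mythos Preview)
Your construction and overall plan match the paper's proof: build $\Dutom_C$ via \Cref{constructDFA}, assign counter-actions from table witnesses, and argue consistency via property~\ref{cond2}. Your explicit prefix induction and your size-bound argument are both more detailed than what the paper supplies; the paper simply asserts ``By construction, $\ce_{\Butom_C}(ps)=\ce_\Autom(ps)$'' and never argues the bound $|\Butom_C|\leq|\Autom|$ at all.

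The gap is in your workaround for the obstacle you correctly flag. Property~\ref{cond2} only speaks about words $p's'$ with $p'\in\P\cup\P\Sigma$ and $s'\in\S$. When an intermediate prefix $w$ of $ps$ lies outside that set (for instance $w=pb_1b_2$ with $s=b_1b_2b_3$, $b_1b_2\notin\S$, $b_2\notin\S$), nothing in the $Pos/Neg$ constraints ties the state $q$ that $\Dutom_C$ reaches on $\Enc_\Autom(w)$ to the true tuple $\Actions_\Autom(w)$. Comparing the chosen representative against other \emph{table} witnesses at $q$ only shows they agree with one another; it says nothing about $w$. Your concluding sentence ``the counter-change \ldots\ is forced to equal $\ce_\Autom(w\sigma)-\ce_\Autom(w)$'' is therefore precisely the statement that remains to be proved, not a consequence of property~\ref{cond2}. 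In fairness, the paper's own proof glosses over exactly this point, so you have not missed an argument that the paper provides; you have correctly located where one is absent. A smaller issue in your size-bound argument: completeness of $\Autom$ does not imply the existence of a non-final sink state (take $\Lang(\Autom)=\Sigma^*$), so the degenerate cases $F=Q$ and $F=\emptyset$ need a separate sentence.
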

\begin{proof}
\label{lemmaSev}
Let the function \upshape$\texttt{ConstructAutomaton}$  for input $C$ return a \dfa $\Dutom_C= (Q, "\widetilde\Sigma", q_0, \delta, F)$. 
For all $p\in\P\cup\P\Sigma$ and $s\in\S\cup\S\Sigma$, we can find $\ce(ps)$ from $C$. 
We define the "\droca" $\Butom_C= (Q, \Sigma, q_0, \delta_0, \delta_1, F)$ where $\delta_0$ and $\delta_1$ are specified as follows.
For all $q\in Q$, $a\in\Sigma$,  $\delta_0(q, a)= (\delta(q, a^0),c)$ for some $c\in\{0,1\}$, if there exists $p\in\P\cup\P\Sigma$ and $s\in\S$ such that $\Dutom_C$ on reading ${\Enc_{\Autom}(ps)}$ reaches the state $q$ with $\ce(ps)=0$ and $\ce(psa)=c$.
Similarly, for all $q\in Q$, $a\in\Sigma$, $\delta_1(q, a)= (\delta(q, a^1),c)$ for some $c\in\{0,1,-1\}$, if there exists $p\in\P\cup\P\Sigma$ and $s\in\S$ such that $\Dutom_C$ on reading ${\Enc_{\Autom}(ps)}$ reaches the state $q$ with $\ce(ps)>0$ and $\ce(psa)=c$. 
\final{If there are any transitions that do not have a counter-action, it means that there is no word in the "observation table" that took that transition. One can assign any arbitrary counter-action for these transitions.}

We know that for any word $w, w^\prime\in\Sigma^*$, if $\Dutom_C$ on reading $w$ and $w^\prime$ reaches the same state, then $\Actions(w)$ is similar to $\Actions(w^\prime)$. In our construction, we are assigning counter-actions to transitions from a state in $\Butom_C$ based on the counter-actions of words that reach that state in $\Dutom_C$. Since all words that reach a state have similar counter-actions, this assignment will be consistent.  
By construction, for all $p\in\P\cup\P\Sigma$ and $s\in\S$, $\ce_{\Butom_C}(ps)= \ce_{\Autom}(ps)$. This also ensures that $\Enc_{\Autom}(ps)= \Enc_{\Butom_C}(ps)$. Note that, $\Dutom_C$ on reading $\Enc_{\Autom}(ps)$ reaches a final state if and only if $ps\in"\Lang"(\Autom)$. Since $\Enc_{\Autom}(ps)= \Enc_{\Butom_C}(ps)$, we get that $\Butom_C$ reaches a final state on reading $ps$ if and only if $ps\in"\Lang"(\Autom)$. Therefore, for all $p\in\P\cup\P\Sigma$ and $s\in\S$, $\Butom_C(ps)=1$ if and only if $\Memb(ps)=1$.
\end{proof}

\subsection{\minOCA: The Learning Algorithm}
\Cref{algdroca} learns a "\droca" that is "equivalent" to an input "\droca" $\Autom$.
Initially, it sets up an "observation table" using empty strings and incrementally refines this table to distinguish states of the unknown "\droca". The process iteratively increases an integer value $d$ and uses "membership" and "counter value queries" to construct a "$d$-closed" and "$d$-consistent" "observation table"  $C=(\P, \S, \Memb,"\CV\upharpoonright_{\P\cup\P\Sigma}@\CV\upharpoonright",\Actions)$ \final{(see lines \ref{whileloop} -- \ref{whileEnd} of \Cref{algdroca}). This part resembles the $\Lstar$ algorithm.} Making the table "$d$-closed" adds new rows to $\P$, and making it "$d$-consistent" adds new columns to $\S$. 
We construct a "\droca" from a "$d$-closed" and "$d$-consistent" "observation table" $C$ using \Cref{lem:constructDROCA} and ask a "minimal synchronous-equivalence query". If the teacher provides a counter-example, then all its prefixes are added to $\P$, and the value of $d$ is updated to the "height" of the counter-example, if it is more than $d$.
The table is then extended until it becomes "$d$-closed" and "$d$-consistent". This process continues until the correct "\droca" is learnt.

\begin{algorithm}[h]
\LinesNumbered
\caption{ ""\text{\upshape\minOCA}@\minOCA"": "\droca" learning algorithm.}\label{algdroca}
\SetKwInOut{Input}{Require}
\SetKwInOut{Output}{Ensure}
\Input{The teacher knowing a "\droca" $\Autom$.}
\Output{A "\droca" accepting the same language as $\Autom$ is returned.}
\BlankLine
Initialise $\P$ and $\S$ to $\{\epsilon\}$, and $d$ to $0$.\\
Initialise the "observation table" $C=(\P,\S,\Memb,"\CV\upharpoonright_{\P\cup\P\Sigma}@\CV\upharpoonright",\Actions)$.\\
\Repeat{teacher replies \emph{yes} to a "minimal synchronous-equivalence query"}
{
\While{$C$ is not "$d$-closed" or not "$d$-consistent" \label{whileloop}}{
	\If{$C$ is not "$d$-closed" }{
	Find $p\in \P$, $a\in\Sigma$ such that $\cv{p a}\leq d$, $"row"(p a) \neq "row"(p^\prime)$ for all $p^\prime \in \P$.\\ 
	Add $p a$ to $\P$.	\label{dclosed}}
\If{$C$ is not d-consistent}{
Find $p,q\in \P$, $a\in\Sigma$, $s\in \S$ such that $\cv{p}=\cv{q} \leq d$, $"row"(p)= "row"(q)$, and   
{($\Memb(p a  s) \neq \Memb(q a  s)$ or $\Actions(p  a  s) \neq \Actions(q  a  s)$).}\\
{
Add $a  s$ to $\S$. \label{dconsistent}}
}
Extend $\Memb$ and $\Actions$ to $(\P\cup \P \Sigma) \S$, using "membership" and "counter value queries".

}\label{whileEnd}

Construct a "\droca" $\Butom_C$ from $C$ using \Cref{lem:constructDROCA}. \label{constructdroca} \\
Ask "minimal synchronous-equivalence query" \textsf{"MSQ"}$_\Autom(\Butom_C)$. \label{minequivquery} \\
\If{teacher gives a counter-example $z$}{
Add $z$ and all its prefixes to $\P$ and extend $\Memb$ and $\Actions$ to $(\P\cup\P \Sigma) \S$ using "membership" and "counter value queries". \label{counterexample}\\
\If{$"height_{\Autom}"(z)> d$}{
	$d="height_{\Autom}"(z)$.
	}
} 
	
}
Halt and output $\Butom_C$.
\end{algorithm}

\subsubsection{Example: \minOCA in Action.}
We assume the learner is trying to learn the "\droca" given in \Cref{exminOCA1} from the teacher. The learner uses \Cref{algdroca} ("\minOCA") to learn this language. First, the learner initialises $"\P"$ and $"\S"$ to $\{\epsilon\}$ and $d$ to $1$. The initial "observation table" $C=("\P","\S","\Memb","\CV\upharpoonright_{\P\cup\P\Sigma}","\Actions")$ is built as shown in \Cref{tabExMinOCA1} using "membership" and "counter value queries". 

\begin{figure}[b]
\centering
\scalebox{1}{
\begin{tikzpicture}[shorten >=1pt,node distance=3cm,on grid,auto]
\tikzset{every path/.style={line width=.4mm}}\
   \tikzset{initial text={}}
   \node[state, initial] (q0)   at (0,0)   {$q_0$};
   \node[state,accepting]           (q1)   at (6,0)   {$q_1$};
   \node[state,accepting]           (q2)   at (3,0) {$q_2$};

   \path[->]
   (q0) edge[loop below] node {$b_{=0}/{\small0}$} (q0)
   (q1) edge[loop below] node {$a_{>0}/{\small}0$} (q1)
   (q0) edge[bend left] node[xshift=.2cm,yshift=.4cm]{$a_{=0}/{\small+1},\ a_{>0}/{\small}0$} (q2)
   (q2) edge[right] node[xshift=-.5cm,yshift=-.3cm] {$b_{>0}/{\small}0$} (q0)
   (q2) edge[bend left] node {$a_{>0}/{\small}0$} (q1)
   (q1) edge[left] node[xshift=.7cm, yshift=-.3cm] {$b_{>0}/{\small}0$} (q2);
   \draw (1.4,.9) node {$b_{>0}/0$};
\end{tikzpicture}
}
\caption[Example: The \droca under learning.]{Example: The "\droca" under learning.}
\label{exminOCA1}
\end{figure}

\begin{table}[h]
\centering
\begin{minipage}{0.32\textwidth}
\centering
\begin{tabular}{|c|c|c|c|}
 \multirow{2}{*}{} & \multirow{2}{*}{\ce}   & \multicolumn{2}{c|}{$\epsilon$}       \\ \cline{3-4} 
                &                          & \multicolumn{1}{c|}{Mem} & $Actions$\\ \Xhline{1pt}
$\epsilon$          & 0                      & 0          &$(0,+1,0)$    \\ \Xhline{1pt}
a          & 1                      & 1          &$(1,0,0)$     \\ \hline
b          & 0                      & 0          &$(0,+1,0)$    \\ \hline
\end{tabular}
\vspace{1.1cm}
\caption[Initial observation table with $\P=\S=\{\epsilon\}$.]{Initial "observation table" with $"\P"="\S"=\{\epsilon\}$.}
\label{tabExMinOCA1}
\end{minipage}
\hfill
\begin{minipage}{0.32\textwidth}
\centering
\begin{tabular}{|c|c|c|c|}
 \multirow{2}{*}{} & \multirow{2}{*}{\ce}   & \multicolumn{2}{c|}{$\epsilon$}       \\ \cline{3-4} 
                &                          & \multicolumn{1}{c|}{"\Memb"} & $"\Actions"$\\ \Xhline{1pt}
$\epsilon$          & 0                      & 0          &$(0,+1,0)$    \\ \hline
a          & 1                      & 1          &$(1,0,0)$     \\ \Xhline{1pt}
b          & 0                      & 0          &$(0,+1,0)$    \\ \hline
aa         & 1                      & 1          &$(1,0,0)$     \\ \hline
ab         & 1                      & 0          &$(1,0,0)$ \\  \hline

\end{tabular}
\vspace{.8cm}
\caption[An observation table with $\P=\{\epsilon, a\}$ and $\S=\{\epsilon\}$ that is not $1$-closed.]{An "observation table" with $"\P"=\{\epsilon, a\}$ and $"\S"=\{\epsilon\}$ that is not $1$-closed.}
\label{tabExMinOCA0}
\end{minipage}
\hfill
\begin{minipage}{0.32\textwidth}
\centering
\begin{tabular}{|c|c|c|c|}
 \multirow{2}{*}{} & \multirow{2}{*}{\ce}   & \multicolumn{2}{c|}{$\epsilon$}       \\ \cline{3-4} 
                &                          & \multicolumn{1}{c|}{"\Memb"} & $"\Actions"$\\ \Xhline{1pt}
$\epsilon$          & 0                      & 0          &$(0,+1,0)$    \\ \hline
a          & 1                      & 1          &$(1,0,0)$     \\ \hline
ab         & 1                      & 0          &$(1,0,0)$     \\ \Xhline{1pt}
b          & 0                      & 0          &$(0,+1,0)$    \\ \hline
aa         & 1                      & 1          &$(1,0,0)$     \\ \hline
aba        & 1                      & 1          &$(1,0,0)$     \\ \hline
abb        & 1                      & 1          &$(1,0,0)$     \\ \hline
\end{tabular}
\caption[A $1$-closed and $1$-consistent observation table with $\P=\{\epsilon, a,ab\}$ and $\S=\{\epsilon\}$.]{A $1$-closed and $1$-consistent "observation table" with $"\P"=\{\epsilon, a,ab\}$ and $"\S"=\{\epsilon\}$.}
\label{tabExMinOCA2}
\end{minipage}
\end{table}

The "observation table" shown in \Cref{tabExMinOCA1} is not "$d$-closed" since  $"\ce"(a)=d$ and $"row"(a)\neq "row"(p)$ for all $p\in"\P"$ with $"\ce"(p)= d$. Therefore, we add $a$ to $"\P"$ and fill the "observation table" using "membership" and "counter value queries" (see  \Cref{tabExMinOCA0}). 
However, this "observation table" is still not "$d$-closed" since $"\ce"(ab)=d$ and $"row"(ab)\neq "row"(p)$ for all $p\in"\P"$, with $"\ce"(p)=d$. Therefore, we add $ab$ to $"\P"$ and fill the "observation table" using "membership" and "counter value queries". 
The "observation table" after this step is shown in \Cref{tabExMinOCA2}. 
This "observation table" is both "$d$-closed" and "$d$-consistent". 
We use the procedure $\texttt{"ConstructAutomaton"}$ using this "observation table" as input to obtain a "\encodedDFA". 
Let $x$ denote the tuple $(0,+1,0)$ and $y$ denote the tuple $(1,0,0)$. The tuple $x$ (resp. $y$) represents the counter-actions on reading symbols from zero (resp positive) counter value. 
The sets $Pos$ and $Neg$ created inside the function $\texttt{"ConstructAutomaton"}$ are given below.
\begin{align*}
Pos&=\{x, a^0, a^0 y, b^0 x, a^0 a^1, a^0 a^1 y, a^0 b^1 y, a^0 b^1 a^1, a^0 b^1 a^1 y, a^0 b^1 b^1, a^0 b^1 b^1 y\}.\\
Neg&=\{\epsilon, b^0, a^0b^1\}.
\end{align*}

The "\encodedDFA" returned by $\texttt{"ConstructAutomaton"}$ is shown in \Cref{DFAexminOCA1}.

\begin{figure}[h]
\begin{minipage}{0.47\textwidth}
\centering
\begin{tikzpicture}[shorten >=1pt,node distance=3cm,on grid,auto]
\tikzset{every path/.style={line width=.4mm}}\
   \tikzset{initial text={}}
    \node[state, initial] (q0) at (0,0)  {$q_0$}; 
    \node[state, accepting] (q1) at (4,0) {$q_1$}; 

    \path[->] 
    (q0) edge [loop above] node {$b^0$} ()
         edge [bend right, below] node {$a^0,a^1,b^1,x,y$} (q1)
    (q1) edge [loop above] node {$a^1,y$} ()
    	edge [bend right, above] node {$a^0,b^0,b^1,x$} (q0)
    ;
\end{tikzpicture}
\caption[The \encodedDFA obtained from  \Cref{tabExMinOCA2}.]{The "\encodedDFA" obtained from $\texttt{"ConstructAutomaton"}$ using \Cref{tabExMinOCA2} as input.}
\label{DFAexminOCA1}
\end{minipage}
\hfill
\begin{minipage}{0.47\textwidth}
\centering
\begin{tikzpicture}[shorten >=1pt,node distance=3cm,on grid,auto]
\tikzset{every path/.style={line width=.4mm}}\
   \tikzset{initial text={}}
    \node[state, initial] (q0) at (0,0)  {$q_0$}; 
    \node[state, accepting] (q1) at (4,0) {$q_1$}; 

    \path[->] 
    (q0) edge [loop above] node {$b_{=0}/{\small0}$} ()
         edge [bend right, below] node {$a_{=0}/{\small+}1$, $a_{>0}/{\small0}$, $b_{>0}/{\small0}$} (q1)
    (q1) edge [loop above] node {$a_{>0}/{\small0}$} ()
    	edge [bend right, above] node {$b_{>0}/{\small0}$} (q0)
    ;
\end{tikzpicture}
\caption[The \droca obtained from the \encodedDFA in \Cref{DFAexminOCA1}.]{The "\droca" obtained from the "\encodedDFA" in \Cref{DFAexminOCA1}.}
\label{OCAexminOCA1}
\end{minipage}
\end{figure}

A "\droca" is constructed from this "\encodedDFA" by removing from it the transitions on $x$ and $y$, and by assigning counter-actions to the rest of the transitions based on the tuples $x$ and $y$. 
For instance, the tuple $x=(0,+1,0)$ is accepted from state $q_0$. This means that the counter-action on a transition on $a$ (resp. $b$) from counter value $0$ from this state has counter-action $+1$ (resp. $0$). 
The resultant "\droca" is shown in \Cref{OCAexminOCA1}.

The learner now asks a "minimal synchronous-equivalence query" with this "\droca" as input. This "\droca" is "counter-synchronous" with the teachers "\droca" but not equivalent. Therefore, the teacher returns a minimal counter-example. Let us assume that the counter-example returned by the teacher is $aab$.  We add $aab$ and all its prefixes to $"\P"$ and extend the "observation table" using "membership" and "counter value queries". The "observation table" obtained after this step is shown in \Cref{tabExMinOCA3}. Since $"height"(aab)=1=d$, we keep the value of $d$ unchanged.

\begin{table}[h]
\begin{minipage}{0.47\textwidth}
\centering
\begin{tabular}{|c|c|c|c|}
 \multirow{2}{*}{} & \multirow{2}{*}{\ce}   & \multicolumn{2}{c|}{$\epsilon$}       \\ \cline{3-4} 
                &                          & \multicolumn{1}{c|}{"\Memb"} & $"\Actions"$\\ \Xhline{1pt}
$\epsilon$          & 0                      & 0          &$(0,+1,0)$    \\ \hline
a          & 1                      & 1          &$(1,0,0)$     \\ \hline
ab         & 1                      & 0          &$(1,0,0)$     \\ \hline
aa         & 1                      & 1          &$(1,0,0)$     \\ \hline
aab        & 1                      & 1          &$(1,0,0)$     \\ \Xhline{1pt}
b          & 0                      & 0          &$(0,+1,0)$    \\ \hline
aba        & 1                      & 1          &$(1,0,0)$     \\ \hline
abb        & 1                      & 1          &$(1,0,0)$     \\ \hline
aaa        & 1                      & 1          &$(1,0,0)$     \\ \hline
aaba       & 1                      & 1          &$(1,0,0)$     \\ \hline
aabb       & 1                      & 0          &$(1,0,0)$     \\ \hline
\end{tabular}
\caption[An observation table with $\P=\{\epsilon, a, ab,aa, aab\}$ and $\S=\{\epsilon\}$ that is not $1$-closed.]{An "observation table" with $"\P"=\{\epsilon, a, ab,aa, aab\}$ and $"\S"=\{\epsilon\}$ that is not $1$-closed.}
\label{tabExMinOCA3}
\end{minipage}
\begin{minipage}{0.47\textwidth}
\centering
\begin{tabular}{|c|c|c|c|c|c|}
 \multirow{2}{*}{} & \multirow{2}{*}{\ce}   & \multicolumn{2}{c|}{$\epsilon$}    & \multicolumn{2}{c|}{$b$}    \\ \cline{3-6} 
                &                          & \multicolumn{1}{c|}{"\Memb"} & $"\Actions"$ & \multicolumn{1}{c|}{"\Memb"} & $"\Actions"$\\ \Xhline{1pt}
$\epsilon$  & 0  & 0     &     $(0,1,0)$ & 0 & $(0,+1,0)$ \\ \hline
a   & 1  & 1    &    $(1,0,0)$ & 0 & $(1,0,0)$ \\ \hline
ab  & 1  & 0   &     $(1,0,0)$ & 1 & $(1,0,0)$ \\ \hline
aa  & 1  & 1     &      $(1,0,0)$ & 1 & $(1,0,0)$ \\ \hline
aab & 1  & 1    &     $(1,0,0)$ & 0 & $(1,0,0)$ \\ \Xhline{1pt}
b   & 0  & 0     &      $(0,1,0)$ & 0 & $(0,+1,0)$ \\ \hline
aba & 1  & 1    &     $(1,0,0)$ & 0 & $(1,0,0)$ \\ \hline
abb & 1  & 1    &     $(1,0,0)$ & 0 & $(1,0,0)$ \\ \hline
aaa & 1  & 1     &     $(1,0,0)$ & 1 & $(1,0,0)$ \\ \hline
aaba & 1 & 1   &      $(1,0,0)$ & 1 & $(1,0,0)$ \\ \hline
aabb & 1 & 0   &     $(1,0,0)$ & 1 & $(1,0,0)$ \\ \hline
\end{tabular}
\caption[A $1$-closed and $1$-consistent observation table with $\P=\{\epsilon, a, ab,aa,aab\}$ and $\S=\{\epsilon,b\}$.]{A $1$-closed and $1$-consistent "observation table" with $"\P"=\{\epsilon, a, ab,aa,aab\}$ and $"\S"=\{\epsilon,b\}$.}
\label{tabExMinOCA4}
\end{minipage}
\end{table}

This "observation table" is not "$d$-consistent" since $"row"(aa)="row"(aab)$ but $aa\cdot  b \cdot \epsilon \neq aab \cdot b \cdot \epsilon$. Therefore, we add $b$ to $"\S"$ and extend the table using "membership" and "counter value queries". The "observation table" obtained after this step is shown in \Cref{tabExMinOCA4}. 
 This "observation table" is both "$d$-closed" and "$d$-consistent". Therefore, the learner uses the procedure $\texttt{"ConstructAutomaton"}$ using this "observation table" as input to obtain a "\encodedDFA". The sets $Pos$ and $Neg$ created during this process are given below.
\begin{align*}
Pos&=\{x, a^0, a^0y, b^0x, a^0a^1, a^0a^1y, a^0b^1y, a^0b^1a^1, a^0b^1a^1y, a^0b^1b^1, a^0b^1b^1y, \\
 &a^0a^1a^1, a^0a^1a^1y, a^0a^1b^1, a^0a^1b^1y, a^0a^1b^1a^1, a^0a^1b^1a^1y, a^0a^1b^1b^1y, b^0b^0x, \\
 & a0b1a1b1y, a0b1b1b1y, a0a1a1b1, a0a1a1b1y, a0a1b1a1b1, a0a1b1a1b1y, \\
 &a^0a^1b^1b^1b^1, a^0a^1b^1b^1b^1y\}.\\
Neg&=\{\epsilon, b^0, a^0b^1, a^0a^1b^1b^1, b^0b^0, a^0b^1a^1b^1, a^0b^1b^1b^1\}.
\end{align*} 

The "\encodedDFA" returned is shown in \Cref{DFAexminOCA2}.
A "\droca" is constructed from this "\encodedDFA" by removing from it the transitions on $x$ and $y$, and by assigning counter-actions to the remaining transitions based on the tuples $x$ and $y$. The resultant "\droca" is shown in  \Cref{OCAexminOCA2}.

\begin{figure}[!h]
\begin{minipage}{0.47\textwidth}
\centering
\begin{tikzpicture}[shorten >=1pt,node distance=3cm,on grid,auto]
\tikzset{every path/.style={line width=.4mm}}\
   \tikzset{initial text={}}
    \node[state, initial] (q0) at (0,0) {$q_0$}; 
    \node[state, accepting] (q1) at (4,0) {$q_1$}; 
    \node[state, accepting] (q2) at (2,-4) {$q_2$};
     
     \draw [gray!20] (2,-5.3)-- (2,-5.3) ;
    \path[->]
    (q0) edge [loop above] node {$b^0$} ()
         edge [bend left] node {$a^0,a^1,b^1,x,y$} (q1)
    (q1) edge [bend left] node {$a^0,a^1$} (q2)
         edge [loop right] node {$b^0,x,y$} ()
         edge [bend left] node {$b^1$} (q0)
    (q2) edge [bend left] node {$a^0$} (q0)
         edge [loop right] node {$a^1,b^0$} ()
         edge [bend left] node [xshift=-.2cm, yshift=-1cm]{$b^1,x,y$} (q1);
\end{tikzpicture}
\caption[The \encodedDFA obtained from \Cref{tabExMinOCA4}.]{The "\encodedDFA" obtained from $\texttt{"ConstructAutomaton"}$ using \Cref{tabExMinOCA4} as input.}
\label{DFAexminOCA2}
\end{minipage}
\hfill
\begin{minipage}{0.47\textwidth}
\centering
\begin{tikzpicture}[shorten >=1pt,node distance=3cm,on grid,auto]
\tikzset{every path/.style={line width=.4mm}}\
   \tikzset{initial text={}}
    \node[state, initial] (q0) at (0,0) {$q_0$}; 
    \node[state, accepting] (q1) at (5,0) {$q_1$}; 
    \node[state, accepting] (q2) at (2.5,-5) {$q_2$};

    \path[->]
    (q0) edge [loop above] node {$b_{=0}/{\small0}$} ()
         edge [bend left] node {$a_{=0}/{\small+}1$, $a_{>0}/{\small}0$, $b_{>0}/{\small}0$ } (q1)
    (q1) edge [bend left] node { $a_{=0}/{\small+}1, a_{>0}/{\small}0$} (q2)
         edge [bend left] node {$b_{>0}/{\small}0$} (q0)
           edge [loop right] node {$b_{=0}/{\small0}$} ()
    (q2) edge [loop right] node { $a_{>0}/{\small}0, b_{=0}/{\small0}$} ()
    	edge [bend left] node {$a_{=0}/{\small+}1$} (q0)
         edge [bend left] node [xshift=-.2cm, yshift=-1cm] {$b_{>0}/{\small}0$} (q1);
\end{tikzpicture}
\caption[The \droca obtained from the \encodedDFA in \Cref{DFAexminOCA2}.]{The "\droca" obtained from the "\encodedDFA" in \Cref{DFAexminOCA2}.}
\label{OCAexminOCA2}
\end{minipage}
\end{figure}
The learner now asks a "minimal synchronous-equivalence query" with this "\droca" as input. Since this is equivalent and "counter-synchronous" with the teachers "\droca", the teacher replies $yes$ to the "minimal synchronous-equivalence query". The algorithm halts by outputing the learnt "\droca" shown in \Cref{OCAexminOCA2}.

\subsubsection{Analysis of "\minOCA":}
The correctness of \Cref{algdroca} follows from the fact that whenever the teacher replies yes to a "minimal synchronous-equivalence query", the learnt "\droca" recognises the target language.
Now, we show that the algorithm terminates after polynomially many queries. 

\AP
Similar to \cite{christof}, we define a refined  Myhill-Nerode congruence $""\simeq"" \subseteq \Sigma^*\times\Sigma^*$ as follows: for  $u,v\in\Sigma^*$, $u"\simeq" v$ if and only if for all $z\in\Sigma^*$, $"\Autom(uz)"="\Autom(vz)"$ and  $\ce_{\Autom}(uz)=\ce_{\Autom}(vz)$. 
 { For all $w\in\Sigma^*$, let $[w]$ denote the equivalence class of $w$ under $"\simeq"$. The behaviour graph of a "\droca" is an infinite state automaton induced by this refined congruence $"\simeq"$. 
Each equivalence class under this refined congruence corresponds to a state of the behaviour graph, and the transitions between them are defined based on the equivalence class of the resultant words. i.e., for $u\in\Sigma^*$ and $a\in\Sigma$ there is a transition from the state corresponding to $[u]$ to the state corresponding to $[ua]$ on reading the symbol $a$. The state corresponding to an equivalence class $[u]$ will be marked as a final state in the behaviour graph if and only if $"\Autom(u)"=1$.  

Let $\parsimeq{C}$ denote the restriction of $"\simeq"$ to the entries in the "observation table" $C$. i.e., for all $p,p'\in\P\cup\P\Sigma$ and $s,s'\in\S$, $ps\parsimeq{C}\  p's'$ if and only if for all $z\in\Sigma^*$ where both $psz$  and $p's'z$ are in $(\P\cup\P\Sigma)\S$, $"\Autom(psz)@\Lang"="\Autom(p's'z)@\Lang"$ and  $\ce_{\Autom}(psz)=\ce_{\Autom}(p's'z)$. 
Given an "observation table" $C$, we can construct a partial behavioural graph $BG_{C}$ induced by $\parsimeq{C}$ in a similar fashion. 
Note that for all $p\in\P\cup\P\Sigma$ and $s\in\S$, the counter value corresponding to the equivalence class reached on reading $ps$ and the membership of $ps$ in $BG_C$ is the same as the counter value reached and membership of $ps$ in $C$.  
 }
\begin{proposition}
\label{sameRow}
Given $p,p'\in\P$, if $p "\simeq" p^\prime$, then $"row"(p)="row"(p')$. 
\end{proposition}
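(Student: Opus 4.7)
The plan is to unfold the definitions of $row$, $\simeq$, $\CV$, $\Memb$, and $\Actions$, and observe that the congruence $\simeq$ preserves exactly the data that $row$ records. Concretely, $row(p) = (\CV(p), \cell{p})$, where $\cell{p}$ is the finite function $s \mapsto (\Memb(ps), \Actions(ps))$ for $s \in \S$. Thus it suffices to show that $p \simeq p'$ implies (i) $\CV(p) = \CV(p')$, and (ii) for every $s \in \S$, both $\Memb(ps) = \Memb(p's)$ and $\Actions(ps) = \Actions(p's)$.

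For (i), instantiate the definition of $\simeq$ with $z = \epsilon$: this gives $\ce_{\Autom}(p) = \ce_{\Autom}(p')$, which is exactly $\CV(p) = \CV(p')$ since $\CV$ is the restriction of $\ce_{\Autom}$. For the membership part of (ii), fix any $s \in \S$ and take $z = s$ in the definition of $\simeq$: this yields $\Autom(ps) = \Autom(p's)$, so $\Memb(ps) = \Memb(p's)$.

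For the $\Actions$ part of (ii), recall that $\Actions(ps) = (\sgn(\ce_\Autom(ps)),\, \ce_\Autom(ps\sigma_1) - \ce_\Autom(ps), \ldots,\, \ce_\Autom(ps\sigma_k) - \ce_\Autom(ps))$. Applying the definition of $\simeq$ with $z = s$ gives $\ce_\Autom(ps) = \ce_\Autom(p's)$, so the $\sgn$ coordinates agree. Applying it with $z = s\sigma_i$ for each $i \in [1,k]$ gives $\ce_\Autom(ps\sigma_i) = \ce_\Autom(p's\sigma_i)$. Subtracting the corresponding equalities coordinate-wise shows that each of the $k$ counter-action components of $\Actions(ps)$ and $\Actions(p's)$ coincide. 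Combining (i) and (ii) gives $row(p) = row(p')$.

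There is no real obstacle here: the proposition is essentially a direct unfolding of definitions, observing that $\simeq$ is defined precisely to preserve acceptance and counter-effect on all continuations, and the $row$ data only looks at continuations in $\S \cup \S\Sigma$, which is a subset of $\Sigma^*$.
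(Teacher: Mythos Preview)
Your proof is correct and is precisely the unfolding-of-definitions argument the paper has in mind; the paper merely states that the proposition ``follows from the definition of $\simeq$'' without spelling out the details. Your decomposition into (i) the $\CV$ component via $z=\epsilon$ and (ii) the $\cell{\cdot}$ components via $z=s$ and $z=s\sigma_i$ is exactly right.
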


In the next proposition, $C$ is an "observation table" and $\Butom_C$ the $\droca$ learnt by the learner from $C$ (in Line \ref{constructdroca} of \Cref{algdroca}).
\begin{proposition} 
\label{propos1}
Let $z$ be a counter-example, with $"height_{\Autom}"(z)= d$, returned by {\upshape\textsf{"MSQ"}}$_{\Autom}(\Butom_C)$. Let $C'$ be the "observation table" obtained by adding the prefixes of $z$ to $\P$ in $C$ and making it "$d$-closed" and "$d$-consistent". Then the number of distinct rows with counter value less than or equal to $d$ in $C'$ is more than that of $C$.
\end{proposition}
\begin{proofsketch}
Let $z = ${\upshape\textsf{"MSQ"}}$_{\Autom}(\Butom_C)$ be a counter-example with $"height_{\Autom}"(z)= d$. 
{Let $C'$ be the "observation table" obtained by adding the prefixes of $z$ to $\P$ in $C$ and making it "$d$-closed" and "$d$-consistent".  Assume for contradiction that the number of distinct rows with counter value less than or equal to $d$ in $C'$ is the same as that of $C$. From \Cref{sameRow}, we get that $BG_C$ and $BG_{C'}$ are the same in this case. From \Cref{lem:constructDROCA}
for all $p\in\P\cup\P\Sigma$ and $s\in\S$, the counter values reached and membership of $ps$ are the same in $\Butom_C$ and $C$. We also know that the counter value corresponding to the equivalence class reached on reading $ps$ and membership of $ps$ in $BG_C$ is the same as the counter value reached and membership of $ps$ in $C$.  Since $BG_{C}= BG_{C'}$, the membership and counter values reached by all prefixes of $z$ in $\Autom$ should also match $\Butom_C$ contradicting the assumption that $z$ is a counter-example.}
\end{proofsketch}

\begin{proposition}
\label{propos2}
For any $d \in \N$, at most $d \times |\Autom|$ many counter-examples of "height" less than or equal to $d$ is returned by the "minimal synchronous-equivalence query".
 \end{proposition}
\begin{proofsketch}
Fix a $d' \in \N$. There are at most $|\Autom|$ many configurations of $\Autom$ with counter value $d'$.
We know that for $p,p'\in\P$, if $"row"(p) \neq "row"(p')$, then $p "\not\simeq" p'$. 
Hence, there are at most $|\Autom|$ distinct rows in the "observation table" with counter value $d'$. 
Consequently, there are at most $d \times |\Autom|$ distinct rows with counter values $d' \leq d$. The claim now follows from \Cref{propos1}.
\end{proofsketch}

\begin{proposition}
\label{propos3}
 At most $|\Autom|^5+1$ many "minimal synchronous-equivalence queries" are executed during the run of \Cref{algdroca}.
\end{proposition}
\begin{proof}
 Assume for contradiction that more than $|\Autom|^5+1$ "minimal synchronous-equivalence queries" are executed. Hence, more than $|\Autom|^5$ counter-examples are returned by these queries. From \Cref{propos2}, it follows that at least one counter-example is of "height" greater than $|\Autom|^4$. However, from \Cref{countersync}, we know that the "height" of the minimal counter-example that distinguishes two "\drocas" of size $|\Autom|$ is at most $|\Autom|^4$. This is a contradiction, since by \Cref{lem:constructDROCA}, $|\Butom_C| \leq |\Autom|$ for any "observation table" $C$ during the run of the algorithm.
\end{proof}

\begin{theorem}
Given a "\droca" $\Autom$, a minimal "counter-synchronous" "\droca" recognising the same language can be learnt with at most $|\Autom|^5+1$ queries to the SAT solver, $|\Autom|^5+1$ "minimal synchronous-equivalence queries" and polynomially many "membership" and "counter value queries". 
\end{theorem}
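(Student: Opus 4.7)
The plan is to extract the bound on minimal synchronous-equivalence queries directly from \Cref{propos3}, and to note that each iteration of the outer \textbf{repeat}-loop of \Cref{algdroca} invokes \texttt{ConstructAutomaton} exactly once (Line~\ref{constructdroca}) and thereby calls \texttt{find\_dfa} once, so the same bound of $|\Autom|^5+1$ applies to the number of SAT-solver invocations. Correctness is immediate: the algorithm halts only when the teacher answers \emph{yes} to a minimal synchronous-equivalence query, in which case the learnt $\Butom_C$ is counter-synchronous with and equivalent to $\Autom$.

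The main remaining task is to bound the number of membership and counter-value queries polynomially in $|\Autom|$. Each such query corresponds to filling one entry of the observation table, so it suffices to bound $|\P\cup\P\Sigma|$ and $|\S|$ polynomially. First I would observe that $d$ stays bounded throughout the run: by \Cref{countersync} applied to $\Autom$ and $\Butom_C$ (noting $|\Butom_C|\le|\Autom|$ from \Cref{lem:constructDROCA}), every counter-example returned by \textsf{SEQ} has height at most $|\Autom|^4$, so $d\le|\Autom|^4$, and hence, using \Cref{sameRow}, the number of distinct rows ever appearing in the table is at most $(|\Autom|^4+1)\cdot|\Autom| = O(|\Autom|^5)$, namely one per reachable configuration of $\Autom$ with counter-value at most $d$.

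Next I would bound $|\S|$ and $|\P|$ separately. Each execution of the $d$-consistency branch (Line~\ref{dconsistent}) adds a suffix that strictly splits at least one pair of previously-equal rows, so $|\S|$ is bounded by the number of distinct rows, $O(|\Autom|^5)$. Each $d$-closure step (Line~\ref{dclosed}) adds to $\P$ an element with a genuinely new row, contributing at most $O(|\Autom|^5)$ additions. The remaining growth of $\P$ comes from prefixes of counter-examples (Line~\ref{counterexample}): by \Cref{propos3} there are at most $|\Autom|^5+1$ counter-examples, each of length at most $2|\Autom|^5$ by \Cref{countersync}, yielding $O(|\Autom|^{10})$ additional prefixes. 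Hence $|\P\cup\P\Sigma| = O(|\Autom|^{10}\cdot|\Sigma|)$, and the total number of table cells -- which is exactly the number of membership and counter-value queries -- is polynomial in $|\Autom|$ and $|\Sigma|$. The main obstacle is precisely the prefix-explosion in Line~\ref{counterexample}, which is why the sharp length bound of \Cref{countersync} is essential: without it a single equivalence query could blow the table up exponentially, destroying the polynomial query complexity.
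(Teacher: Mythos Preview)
Your proposal is correct and follows the paper's approach: the bound of $|\Autom|^5+1$ on minimal synchronous-equivalence queries (and hence on SAT-solver calls) comes directly from \Cref{propos3}, exactly as the paper indicates. Your treatment of the membership and counter-value queries actually goes beyond the paper, which simply asserts ``polynomially many'' without argument; your derivation---bounding $d\le|\Autom|^4$ via \Cref{countersync} and \Cref{lem:constructDROCA}, then bounding distinct rows by $O(|\Autom|^5)$, $|\S|$ by the row-splitting argument, and $|\P|$ by closure additions plus the $O(|\Autom|^{10})$ counter-example prefixes---is a sound and welcome elaboration of what the paper leaves implicit.
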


We can adapt the learning algorithm for "\drocas" for learning "\vocas". A minimal "\voca" can be learnt using at most $2|\Autom|^3+1$ equivalence queries from \Cref{vocaeq}.
For "\vocas", we do not require the "counter value queries". 
 {Also, there are words that do not have a valid run. i.e., the counter goes below zero during the run. These words will not have a corresponding entry in the "observation table" and will be treated as don't care. In this case, \Cref{algdroca} can be simplified further as we don't have to create the set $"Operations"$ to encode the counter actions, as the actions on the counter are already determined by the input alphabet.}
\begin{corollary}
\label{LVoca}
Given a "\voca" $\Autom$, a minimal "\voca" recognising the same language can be learnt using at most $\mathcal{O}(|\Autom|^3)$ queries to the SAT solver, $\mathcal{O}(|\Autom|^3)$ "minimal synchronous-equivalence queries" and polynomially many "membership queries". 
\end{corollary}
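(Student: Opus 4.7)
The plan is to instantiate \Cref{algdroca} for the special case of \voca, making two simplifications and swapping in the sharper equivalence bound from \Cref{vocaeq}. First, since in a \voca the counter-action on every input letter is determined by the letter itself, both $\ce_{\Autom}(w)$ and all $\Actions$-entries of the observation table can be computed by the learner directly from the word, so no counter-value queries are needed. Second, when \Cref{lem:constructDROCA} is invoked, the counter-action assigned to each transition of $\Butom_C$ depends only on the input letter, so $\Butom_C$ is automatically a \voca of size at most $|\Autom|$.

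Next I would replay the analysis underlying \Cref{propos1}, \Cref{propos2}, and \Cref{propos3}, this time substituting the improved counter-example height bound $height_{\Autom}(z) \leq 2(|\Autom|+|\Autom|^2) = \mathcal{O}(|\Autom|^2)$ from \Cref{vocaeq} in place of the $\mathcal{O}(|\Autom|^4)$ bound implicit in \Cref{countersync}. The analogue of \Cref{propos2} still says that at most $d \cdot |\Autom|$ counter-examples of height at most $d$ can ever be returned, since each such counter-example strictly increases the number of distinct observation-table rows with counter-value in $[0,d]$ and there are at most $|\Autom|$ such rows per counter-value. Setting $d = 2(|\Autom|+|\Autom|^2)$ caps the total number of minimal synchronous-equivalence queries at $2|\Autom|(|\Autom|+|\Autom|^2)+1 = \mathcal{O}(|\Autom|^3)$.

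Each iteration of the outer loop issues exactly one SAT-solver call inside \texttt{ConstructAutomaton}, giving $\mathcal{O}(|\Autom|^3)$ SAT queries in total. The sets $\P$ and $\S$ grow polynomially with the number of counter-examples seen, so the whole observation table and the number of membership queries stay polynomial in $|\Autom|$ and $|\Sigma|$. Minimality of the learnt \voca is inherited from the SAT-based minimal-separating-\dfa routine of \Cref{constructAutomaton}. The only step that warrants explicit verification is that the reasoning behind \Cref{propos1} transports to the \voca setting after dropping counter-value queries, but since in a \voca the counter-values are word-determined, the correctness of the $\CV$ column of the observation table is immediate and the rest of the argument goes through verbatim; I do not anticipate any further obstacle.
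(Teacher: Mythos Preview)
Your proposal is correct and matches the paper's own justification, which is just the two-sentence remark preceding the corollary: adapt \Cref{algdroca} to \voca, drop counter-value queries (since $\ce$ is word-determined), and substitute the $\mathcal{O}(|\Autom|^2)$ height bound from \Cref{vocaeq} into the argument of \Cref{propos3} to get at most $2|\Autom|^3+1$ equivalence queries. Your write-up is in fact more detailed than the paper's, including the observation that minimality follows because every equivalent \voca shares the same encoding $\Enc$ and hence its characteristic \dfa also separates $Pos$ and $Neg$.
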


\section{Implementation}
\label{sec:implementation}
The proposed method, henceforth denoted as "\minOCA", was implemented in Python\footnote{The implementation of "\minOCA", the datasets used, and the complete results generated can be found in the following link: https://doi.org/10.5281/zenodo.14604419 \cite{minOCAimp}.} and is tested against a set of randomly generated "\drocas". In this section, we discuss the implementation details and compare "\minOCA" with the method by Bruyère et al.~\cite{gaetan}, hereafter referred to as ""\text{\upshape\bps}@\bps"".

\subsubsection{Equivalence Query: }
Even though there is a polynomial time algorithm to check the "equivalence" of two "\drocas"~\cite{droca}, the polynomial is so large that it is not suitable for practical applications. 
In "\minOCA", we construct a "\droca" that is "counter-synchronised" with the "\droca" to be learnt. The "equivalence" is checked by a breadth-first search on the configuration graph up to the counter value and length obtained from \Cref{countersync}. 
The "minimal synchronous-equivalence query" either returns a word for which the constructed "\droca" and "\droca" to be learnt reach different counter values or returns a word that is accepted by one and rejected by the other. In our implementation, after each equivalence query, we increment the value of $d$ and make the table is "$d$-closed" and "$d$-consistent".

One major distinction between "\minOCA" and "\bps" is that the latter employs an approximate equivalence query, while the former uses an exact equivalence query. This implies that the "\droca" returned by "\bps" after learning may not be correct. On the other hand, the "\droca" returned by "\minOCA" is always correct. 

\subsubsection{Finding a Minimal-Separating DFA:}
We utilise the Python library \texttt{DFAMiner} by Dell'Erba et al.~\cite{dfaminer} that uses a SAT solver to find a minimal separating \dfa from a given set of positive and negative samples. 
Various other techniques for computing a minimal separating automaton can be found in \cite{Leucker,neidersat,dfa-identify}.

\subsubsection{Random Generation of DROCAs:}
\label{randomOCA}
We follow a procedure similar to that by  Bruyère et al.~\cite{gaetan} to randomly generate "\drocas" with a given number of states. 

Let $n\in\N$ be the number of states of the "\droca" to be generated. The procedure \texttt{GenerateDROCA} used to generate random "\drocas" is as follows. 
First, we initialise the set of states $Q=\{q_1,q_2,\ldots, q_n\}$.
For all $q\in Q$, we add $q$ to the set of final states $F$ with probability 0.5. If $Q=F$ or $F=\emptyset$ after this step, then we restart the procedure. Otherwise, for all $q\in Q$ and $a\in\Sigma$, we assign $\delta_0(q,a)=(p,c)$ (resp. $\delta_1 (q,a)=(p,c)$), with $p$ a random state in $Q$ and $c$ a random counter operation in $\{0,+1\}$ (resp. $\{0,+1,-1\}$). The constructed "\droca" { is} $\Autom=(Q, \Sigma, \{q_1\}, \delta_0,\delta_1, F)$.
If the number of reachable states of \Autom from the initial configuration is not $n$, then we discard $\Autom$ and restart the whole procedure. Otherwise, we output $\Autom$.
The exact procedure is given in \Cref{generateDROCA}. 

\label{algoGenerateRoca}
\begin{algorithm}[H]
\SetKwFunction{proc}{$\texttt{GenerateDROCA}$}{}{}
\SetKwProg{myproc}{Procedure}{}{end}
\SetKwInOut{Input}{Input}
\SetKwInOut{Output}{Output}
\SetKwFunction{connect}{Connect}
\SetKwData{ce}{ce}
\myproc{\proc{}}{
    \Input{An integer $n$}
    \Output{A random \droca with $n$ reachable states}
    \BlankLine
    \Repeat{$F\neq\emptyset$, $F\neq Q$ and \upshape\texttt{reachable}$(\Autom)=n$}{
    Initialise $Q=\{1,2,\ldots,n\}, F=\emptyset$.\\
    \textbf{foreach }{$p$ in $Q$}{
    	add $p$ to $F$ with probability $0.5$.
    }\\
    \textbf{foreach }{$p$ in $Q$ and $a\in\Sigma$}{
    assign $\delta_0(q,a)=(p,c)$ (resp. $\delta_1 (q,a)=(p,c)$), with $p$ a random state in $Q$ and $c$ a random counter operation in $\{0,+1\}$ (resp. $\{0,+1,-1\}$).
    }\\
	Initialise \droca $\Autom=(Q, \Sigma, \{1\}, \delta_0,\delta_1, F)$.
    }
   \textbf{return} $\Autom$.
}
\caption{Algorithm to generate a random \droca with $n$ states.}
\label{generateDROCA}
\end{algorithm}

To ensure that the generated "\droca" has $n$ reachable states, we use the procedure $\texttt{reachable}$ -- this procedure takes a "\droca" as input and returns the number of distinct states visited. It performs a breadth-first search on the configuration graph up to counter value $n^2$ starting from the initial configuration. If a state $q$ is reachable from the initial configuration, then there exists a word $w$ that takes us from that initial configuration to a configuration with state $q$, and the maximum counter value encountered during the run on $w$ is less than $n^2$.  Since the configuration graph up to counter value $n^2$ contains only $n^3$ configurations, this can be done in $\mathcal{O}(n^3)$ time. 
Two datasets, each with 5600 "\drocas", were generated, varying input alphabet sizes from 2 to 5 and states from 2 to 15, with 100 random "\drocas" for each combination.

\paragraph{\textbf{""\dsOne"": Dataset for comparing \upshape"\minOCA" \textit{and} "\bps".}}

The notion of acceptance in "\bps" is by final state and zero counter value, whereas "\minOCA" employs the notion of acceptance with the final state only. This is the widely accepted notion of acceptance for "\drocas" and is the one used by \cite{droca} while proving that the equivalence of "\drocas" is in $\CF{P}$. 
For comparing the two methods, we generated random "\drocas" where the notion of acceptance by final state and counter value zero and the notion of acceptance by final state are the same. For this, we use the procedure \texttt{GenerateDROCA} with an added condition.
This mandates that the final states can only be reached by transitions that read a symbol from counter value of zero and do not modify the counter value.
The results of experiments conducted using "\dsOne" on "\minOCA" and "\bps" are shown in Figures \ref{successFig}-\ref{columnFig}.

\paragraph{\textbf{""\dsTwo"": Dataset for evaluating the performance of \upshape"\minOCA" \textit{on general} DROCAs.}}
The performance of "\minOCA" was evaluated for general "\drocas", where the notion of acceptance is with final states only. Random \drocas were generated using the procedure \texttt{GenerateDROCA} for this purpose. 
The results of experiments conducted using "\dsTwo" on "\minOCA" are shown in Figures \ref{successPyFig}-\ref{columnPyFig}.

\subsubsection{Experimental Results:}

We implemented "\minOCA" in Python and used the Java implementation of "\bps". The computations were performed on an Apple M1 chip with 8GB of RAM, running macOS Sonoma Version 14.3. 

\subsubsection{Comparing \minOCA and \bps Using \dsOne.} 
\begin{figure}[!h]
\centering
\begin{minipage}[t]{\dimexpr.5\textwidth-1em}
\centering
\resizebox{\columnwidth}{!}{%
\includegraphics[trim=0cm 2cm 0cm 0cm, clip=true] {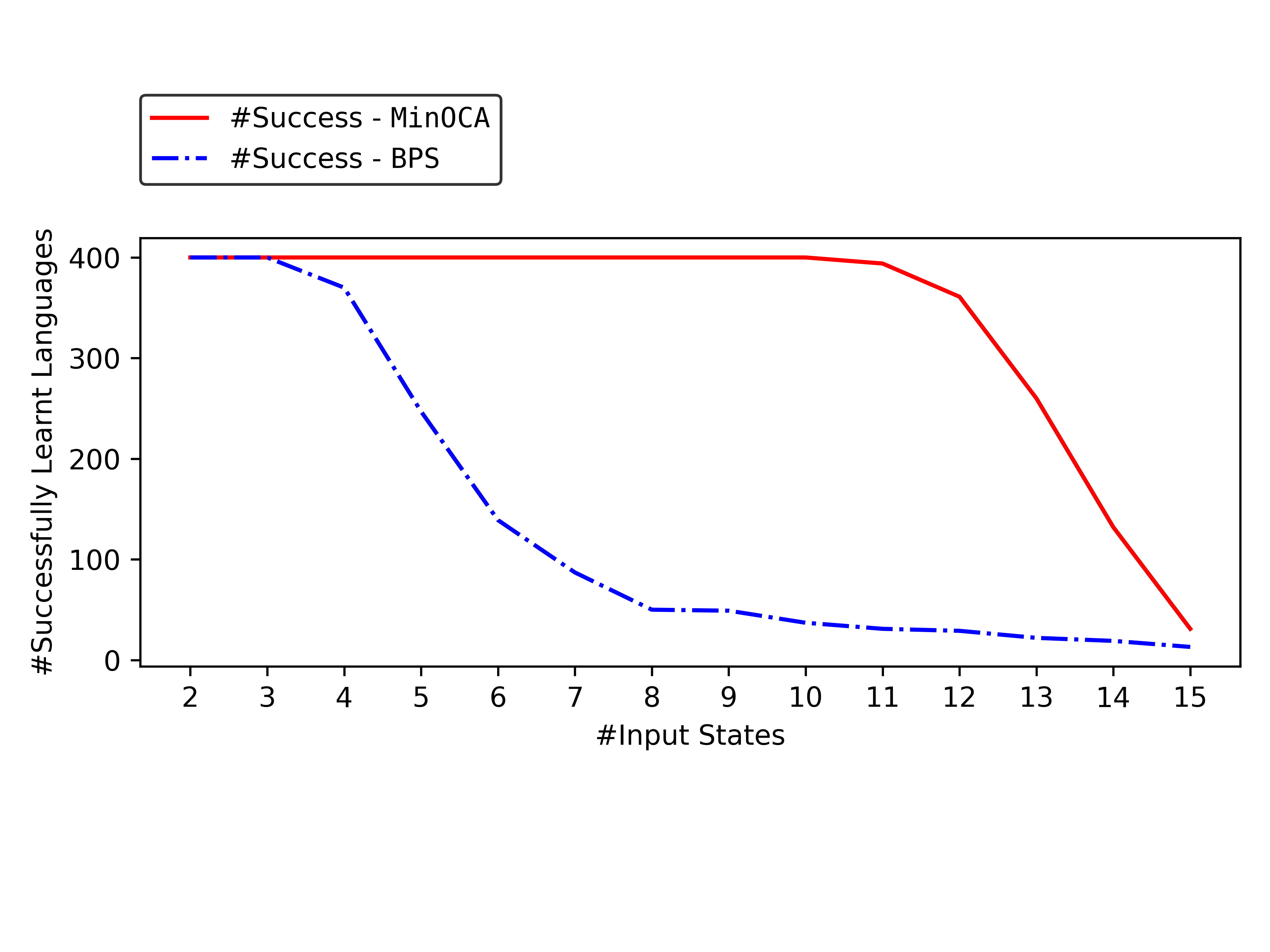}
}
\caption[Number of successfully learnt languages by \minOCA and \bps.]{Number of successfully learnt languages (out of 400) by "\minOCA" and "\bps" for "\dsOne".}
\label{successFig}
\end{minipage}%
\hfill
\begin{minipage}[t]{\dimexpr.5\textwidth-1em}
\centering
\resizebox{\columnwidth}{!}{%
\includegraphics[trim=0cm 2cm 0cm 0cm, clip=true]{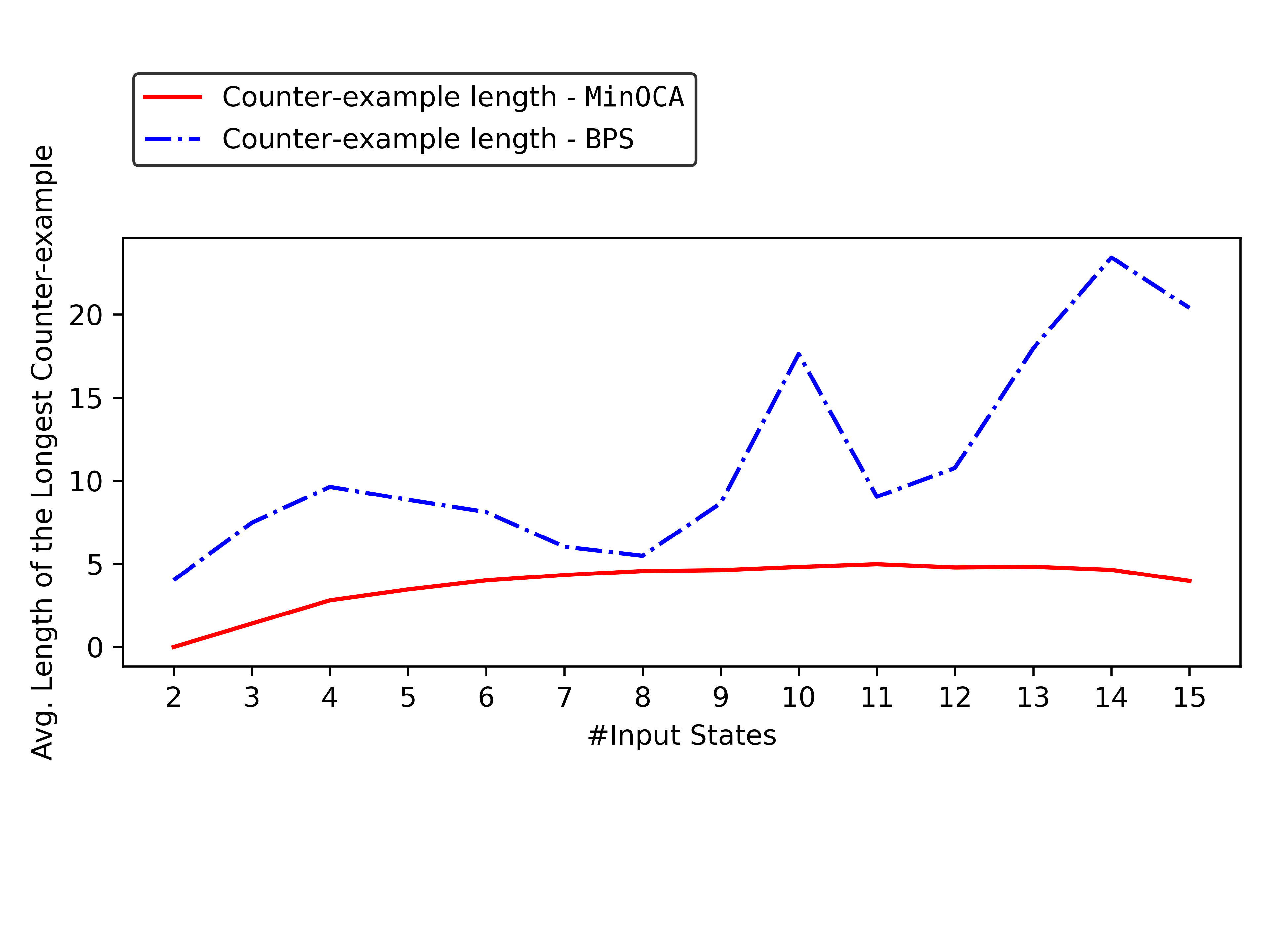}
}
\caption[The average length of the longest counter-example obtained by \minOCA and \bps for \dsOne.]{The average length of the longest counter-example obtained by "\minOCA" and "\bps" for "\dsOne".}
\label{ceFig}
\end{minipage}
\centering
\begin{minipage}{.47\textwidth}
\centering
\resizebox{\columnwidth}{!}{%
\includegraphics[trim=0cm 2cm 0cm 0cm, clip=true]{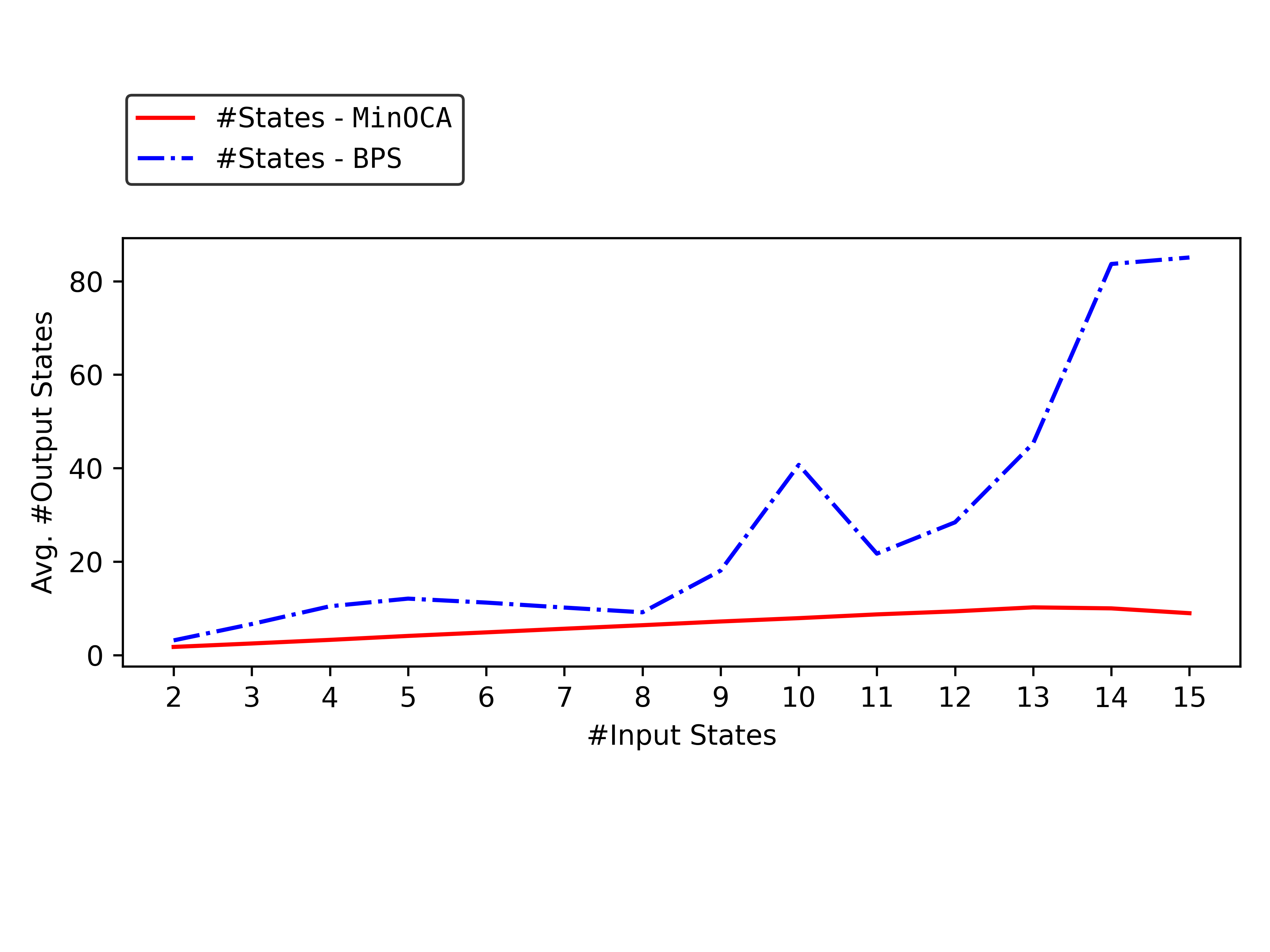}
}
\caption[Average number of states in the learnt \droca obtained by \minOCA and \bps for \dsOne.]{Average number of states in the learnt "\droca" obtained by "\minOCA" and "\bps" for "\dsOne".}
\label{StatesEquiv}
\end{minipage}%
\hfill
\begin{minipage}{.47\textwidth}
\centering
\hspace{5cm}
\resizebox{\columnwidth}{!}{%
\includegraphics[trim=0cm 2cm 0cm 0cm, clip=true]{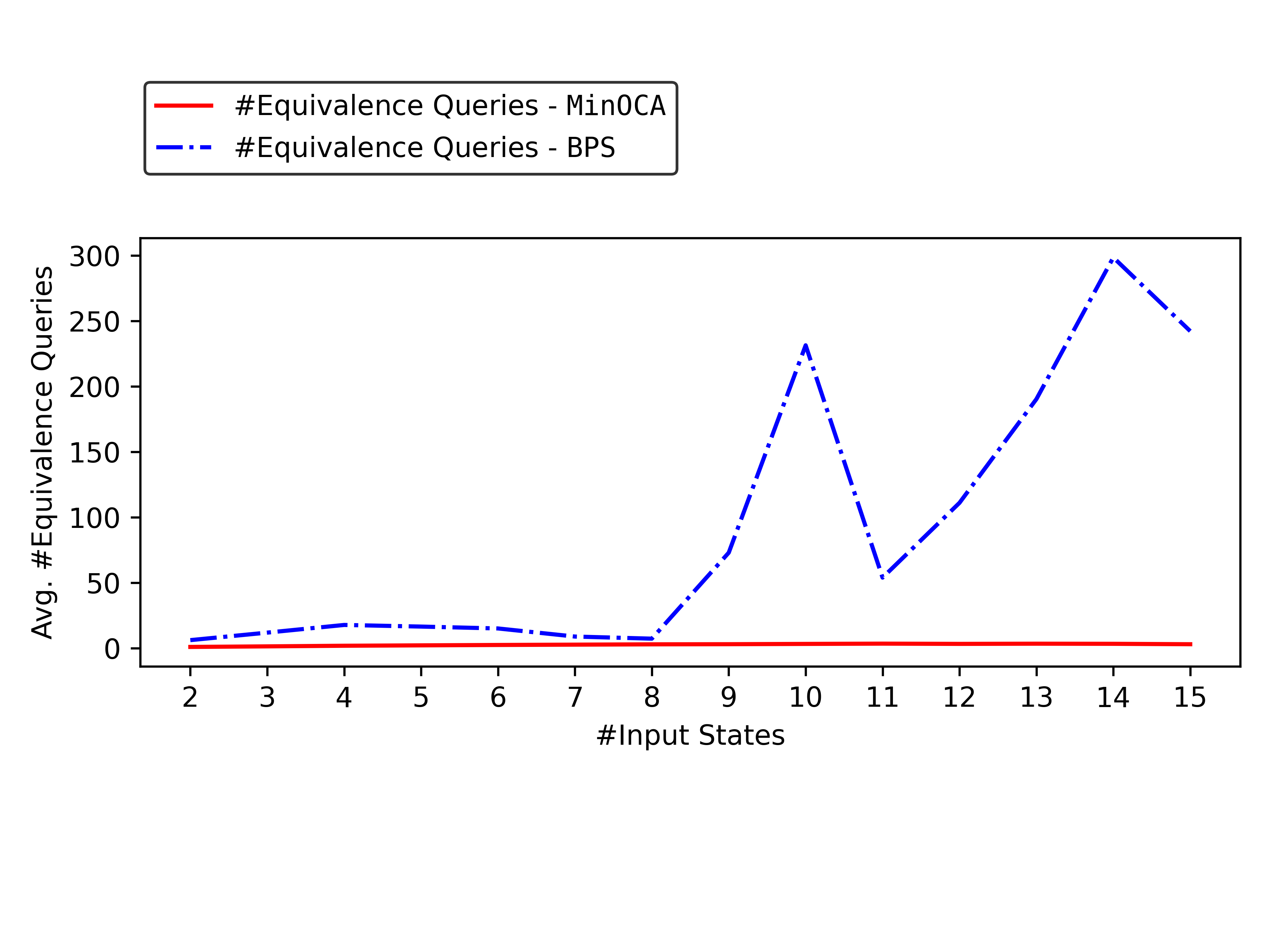}
}
\caption[Average number of equivalence queries used for learning by \minOCA and \bps for \dsOne.]{Average number of equivalence queries used for learning by "\minOCA" and "\bps" for "\dsOne".}
\label{avgEquiv}
\end{minipage}
\hfill
\begin{minipage}[t]{\dimexpr.5\textwidth-1em}
\centering
\resizebox{\columnwidth}{!}{%
\includegraphics[trim=0cm 2cm 0cm 0cm, clip=true] {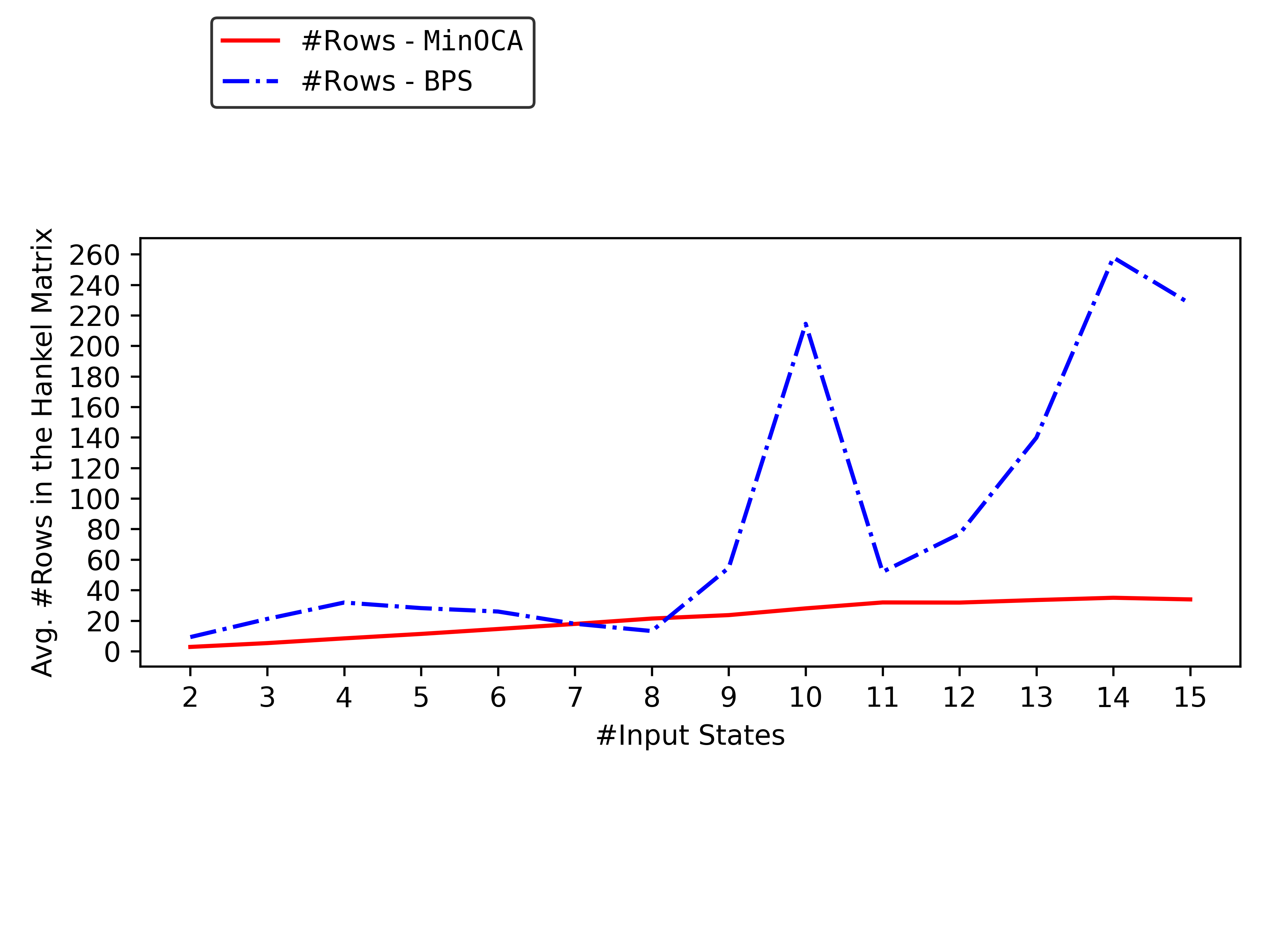}
}
\caption[Average number of rows in the observation table obtained by \minOCA and \bps for \dsOne.]{Average number of rows in the "observation table" obtained by "\minOCA" and "\bps" for "\dsOne".}
\label{rowsFig}
\end{minipage}%
\hfill
\begin{minipage}[t]{\dimexpr.5\textwidth-1em}
\centering
\resizebox{\columnwidth}{!}{%
\includegraphics[trim=0cm 2cm 0cm 0cm, clip=true]{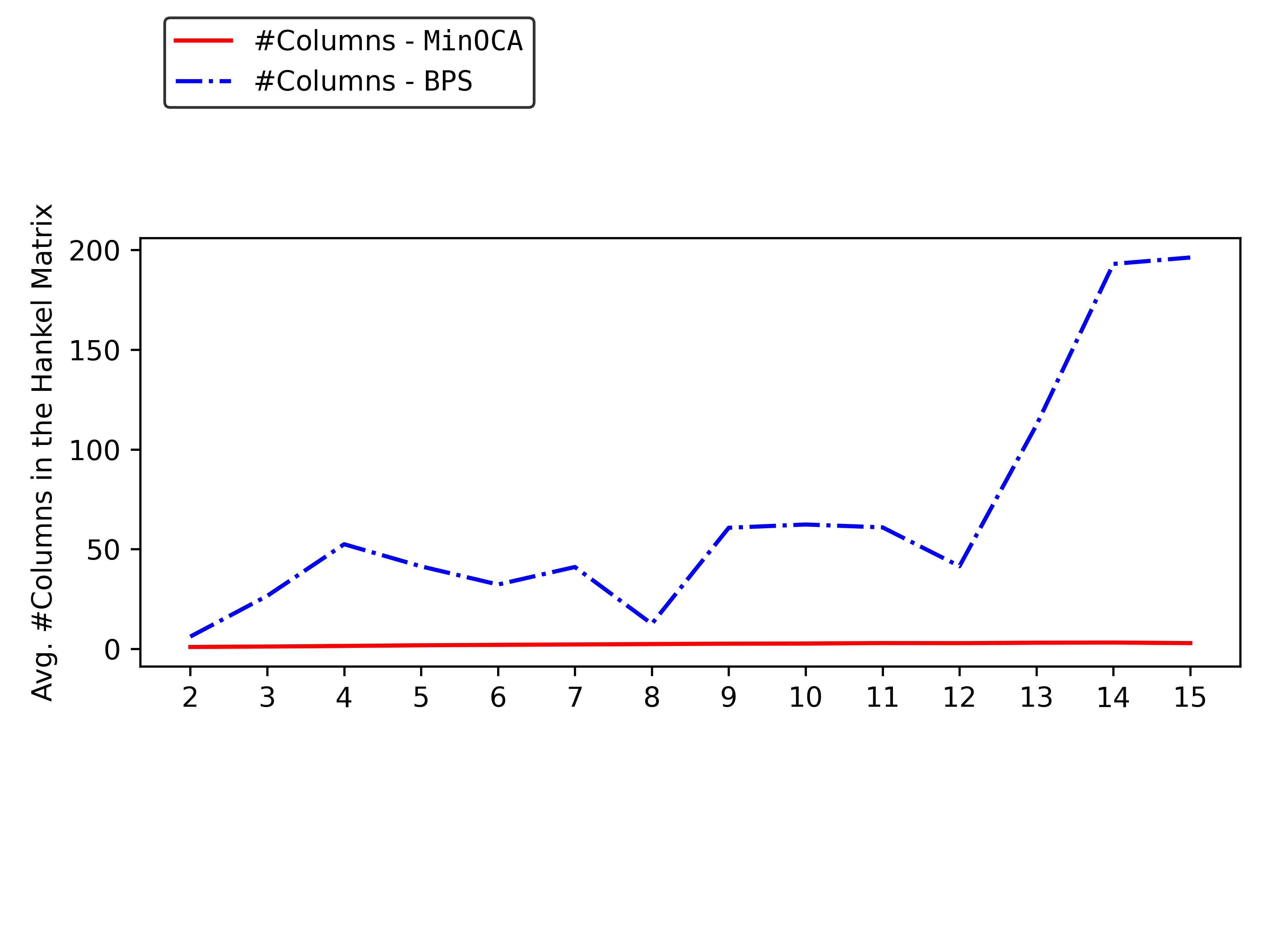}
}
\caption[Average columns in the observation table for \minOCA and \bps on \dsOne.]{Average columns in the "observation table" for "\minOCA" and "\bps" on "\dsOne".}
\label{columnFig}
\end{minipage}
\end{figure}

A timeout of $5$ minutes was allotted for both "\bps" and "\minOCA" for learning each "\droca" in "\dsOne". If the algorithms cannot successfully learn a language within the timeout, we discard that sample and process the next one. The number of languages successfully learned by "\minOCA" and "\bps" for different input sizes is depicted in \Cref{successFig}, and the exact values are provided in \Cref{compTable}. 
 \Cref{ceFig} shows the average length of the longest counter-example obtained during learning. In all cases, "\minOCA" provides a smaller counter-example on average.  \Cref{StatesEquiv} shows the average number of states in the learnt "\droca". The number of states of the automaton learnt by "\minOCA" is always less than or equal to the input size. 
 \Cref{avgEquiv} shows the average number of equivalence queries used for successfully learning the input "\drocas". The number of equivalence queries is smaller for "\minOCA" in all cases. The data used to plot Figures \ref{successFig}- \ref{columnFig} is given in \Cref{fullData1}.

\begin{table}[h]
\centering\scalebox{1.1}{ 
\begin{tabular}{|l|c|l|l|l|l|l|l|l|l|l|l|l|l|l|l|}
\hline
                        & \multicolumn{1}{l|}{\backslashbox{$|\Sigma|$}{$|Q|$}} & \multicolumn{1}{c|}{2} & \multicolumn{1}{c|}{3} & \multicolumn{1}{c|}{4} & \multicolumn{1}{c|}{5} & \multicolumn{1}{c|}{6} & \multicolumn{1}{c|}{7} & \multicolumn{1}{c|}{8} & \multicolumn{1}{c|}{9} & \multicolumn{1}{c|}{10} & \multicolumn{1}{c|}{11} & \multicolumn{1}{c|}{12} & \multicolumn{1}{c|}{13} & \multicolumn{1}{c|}{14} & \multicolumn{1}{c|}{15}\\ \Xhline{1pt}
\multirow{4}{*}{\rotatebox{90}{\minOCA}} & 2                     & 100                    & 100                    & 100                    & 100                    & 100                    & 100                     & 100                     & 100                     & 100           & 95 & 85 & 54 & 27 & 7           \\ \cline{2-16} 
                        & 3                     & 100                    & 100                    & 100                    & 100                    & 100                     & 100                     & 100                     & 100                     & 100                   & 99  & 91  & 61  & 22 & 1      \\ \cline{2-16} 
                        & 4                     & 100                    & 100                    & 100                    & 100                     & 100                     & 100                    & 100                     & 100                     & 100                    & 100 & 91 & 79 &38  & 9      \\ \cline{2-16} 
                        & 5                     & 100                    & 100                    & 100                    & 100                    & 100                     & 100                     & 100                     & 100                      & 100                  & 100  & 94 & 66 & 45 & 14        \\ \Xhline{1pt}
\multirow{4}{*}{\rotatebox{90}{"\bps"}}    & 2                     & 100                    & 100                    & 98                     & 80                     & 61                     & 48                     & 39                     & 32                     & 32                  & 24 &  23 & 18  & 15  &   9     \\ \cline{2-16} 
                        & 3                     & 100                    & 100                    & 95                     & 67                     & 31                     & 17                     & 5                      & 12                     & 2                       & 5 &5  & 2 & 2 &  4  \\ \cline{2-16} 
                        & 4                     & 100                    & 100                    & 90                     & 48                     & 25                     & 13                     & 4                      & 3                      & 3                       & 2 & 0 & 1 & 2 &  0  \\ \cline{2-16} 
                        & 5                     & 100                    & 100                    & 87                     & 52                     & 22                     & 9                      & 2                      & 2                      & 0                       &  0& 1 & 1 & 0 &  0  \\ \hline
\end{tabular}
}
\caption[Number of successfully learnt samples by \minOCA and \bps.]{Number of successfully learnt samples (out of 100) by "\minOCA" and "\bps" for "\dsOne" with the number of states ranging from $2$ to $15$ and the size of the alphabet ranging from $2$ to $5$.}
\label{compTable} 
\end{table}

\begin{table}[h] 
\centering
\begin{tabular}{|l|l|l|l|l|l|l|l|l|l|l|l|l|}
\hline
\textbf{\#States} & \textbf{Success1} & \textbf{Success2} & \textbf{States1} & \textbf{States2} & \textbf{LCE1} & \textbf{LCE2} & \textbf{EqQ1} & \textbf{EqQ2}& \textbf{Row1} & \textbf{Row2} & \textbf{Col1} & \textbf{Col2} \\ \hline
2 & 400 & 400 & 1.75 & 3.15 & 0.00 & 4.02 & 1.00 & 6.19 & 2.78 & 9.22 & 1.00 & 6.18 \\
3 & 400 & 400 & 2.49 & 6.65 & 1.40 & 7.47 & 1.43 & 11.93 & 5.28 & 21.13 & 1.21 & 26.59 \\
4 & 400 & 370 & 3.26 & 10.45 & 2.81 & 9.63 & 1.91 & 17.79 & 8.38 & 31.86 & 1.51 & 52.49 \\
5 & 400 & 247 & 4.11 & 12.09 & 3.46 & 8.85 & 2.21 & 16.55 & 11.30 & 28.20 & 1.83 & 41.39 \\
6 & 400 & 139 & 4.87 & 11.24 & 4.00 & 8.11 & 2.50 & 15.11 & 14.52 & 25.96 & 2.06 & 32.29 \\
7 & 400 & 87 & 5.64 & 10.17 & 4.33 & 6.02 & 2.74 & 8.93 & 17.84 & 17.95 & 2.25 & 41.08 \\
8 & 400 & 50 & 6.40 & 9.16 & 4.56 & 5.48 & 2.98 & 7.32 & 21.36 & 13.18 & 2.46 & 12.56 \\
9 & 400 & 49 & 7.20 & 18.08 & 4.62 & 8.63 & 3.09 & 73.01 & 23.62 & 54.61 & 2.64 & 60.73 \\
10 & 400 & 37 & 7.91 & 40.70 & 4.81 & 17.62 & 3.31 & 231.46 & 28.03 & 214.43 & 2.72 & 62.38 \\
11 & 394 & 31 & 8.71 & 21.71 & 4.98 & 9.03 & 3.51 & 53.97 & 31.92 & 51.94 & 2.95 & 60.94 \\
12 & 361 & 29 & 9.37 & 28.41 & 4.78 & 10.76 & 3.32 & 111.10 & 31.83 & 76.79 & 2.91 & 41.48 \\
13 & 260 & 22 & 10.22 & 45.32 & 4.82 & 17.95 & 3.47 & 190.36 & 33.50 & 139.86 & 3.12 & 112.18 \\
14 & 132 & 19 & 10.00 & 83.68 & 4.64 & 23.42 & 3.38 & 298.58 & 34.98 & 257.89 & 3.20 & 193.00 \\
15 & 31 & 13 & 8.97 & 85.08 & 3.97 & 20.38 & 3.06 & 242.31 & 33.90 & 227.62 & 2.94 & 196.23 \\ \hline
\end{tabular}
\captionof{table}[Data used to plot Figures \ref{successFig}- \ref{columnFig}.]{{The table shows the data used to plot Figures \ref{successFig}- \ref{columnFig}. The column \emph{\#States} denotes the number of states in the input "\droca". The columns $Sucess1,\ States1,\ LCE1,\ EqQ1,\ Row1,\ Col1$ (resp. $Sucess2,\ States2,\ LCE2,\ EqQ2,\ Row2,\ Col2$) respectively denote the number of successfully learnt languages, the average number of states in the learnt "\droca", the average length of the longest counter-example, the average number of equivalence queries used, the average number of rows in the final "observation table" and average number of columns in the final "observation table" for "\minOCA" (resp. "\bps") for "\dsOne".}}
\label{fullData1}
\end{table}

\subsubsection{Evaluating the Performance of \minOCA on \dsTwo.}
\label{generalDROCA}
We used "\dsTwo" to evaluate the performance of "\minOCA" on "\drocas" that accept with final state.
A timeout of $5$ minutes was allocated for learning each language. \Cref{successPyFig} shows the number of successfully learnt languages by "\minOCA". One can see that as the number of states exceed $10$, there is a decrease in the number of samples learnt in $5$ minutes. 

The number of languages successfully learned by "\minOCA" for different input sizes is depicted in \Cref{successPyFig}, and the exact values are provided in \Cref{compTable2}. 
 \Cref{cePyFig} shows the average length of the longest counter-example obtained during learning.  \Cref{StatesPyEquiv} shows the average number of states in the learnt "\droca".
 \Cref{avgPyEquiv} shows the average and maximum number of equivalence queries used for successfully learning the input "\drocas".  The data used to plot Figures \ref{successPyFig}- \ref{columnPyFig} is given in \Cref{fullData2}.

\begin{figure}
\centering
\begin{minipage}[t]{\dimexpr.5\textwidth-1em}
\centering
\resizebox{\columnwidth}{!}{%
\includegraphics[trim=0cm 2cm 0cm 0cm, clip=true] {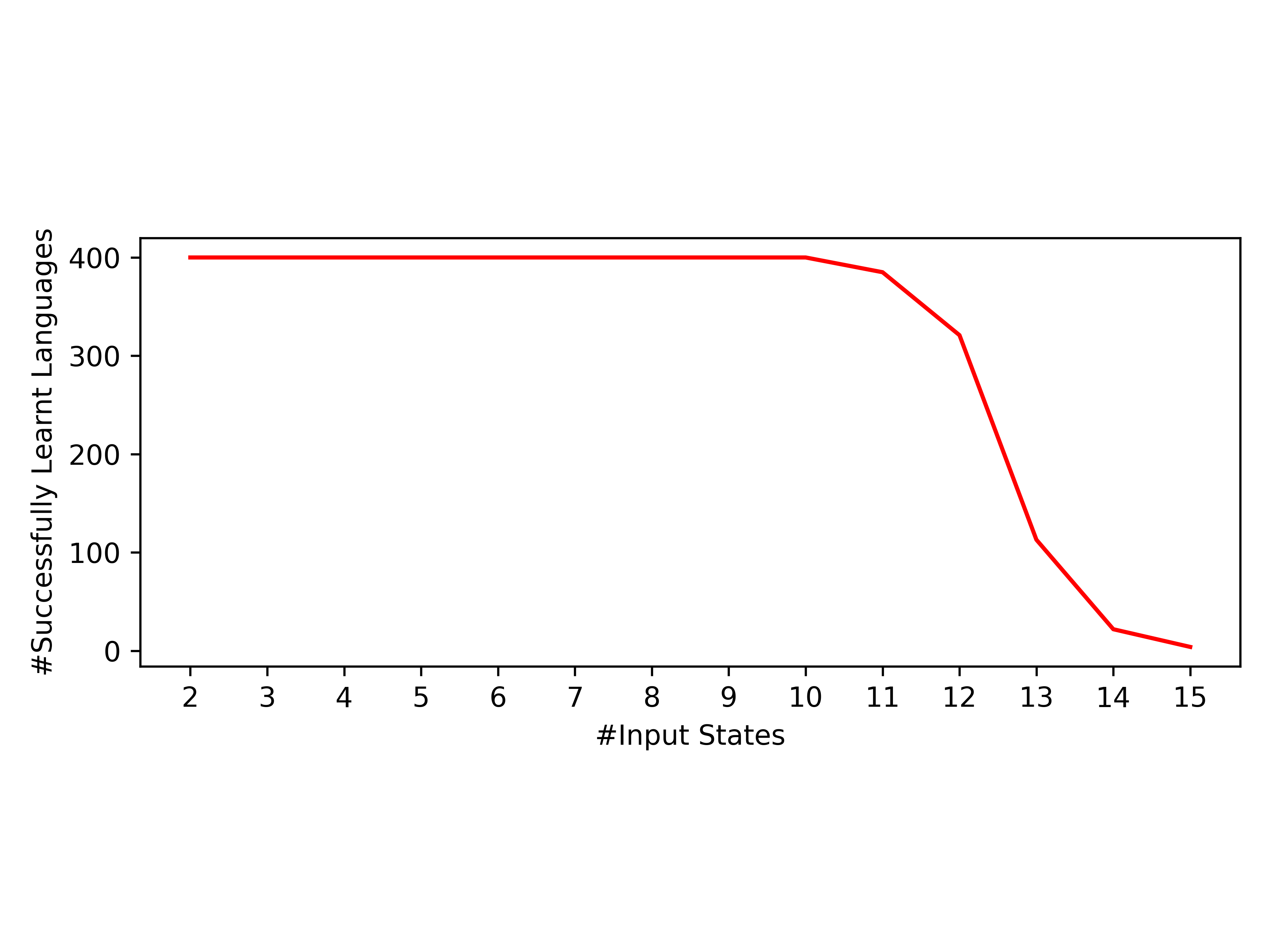}
}
\caption[Number of successfully learnt languages by \minOCA.]{Number of successfully learnt languages (out of 400) by "\minOCA" for "\dsTwo".}
\label{successPyFig}
\end{minipage}%
\hfill
\begin{minipage}[t]{\dimexpr.5\textwidth-1em}
\centering
\resizebox{\columnwidth}{!}{%
\includegraphics[trim=0cm 2cm 0cm 0cm, clip=true]{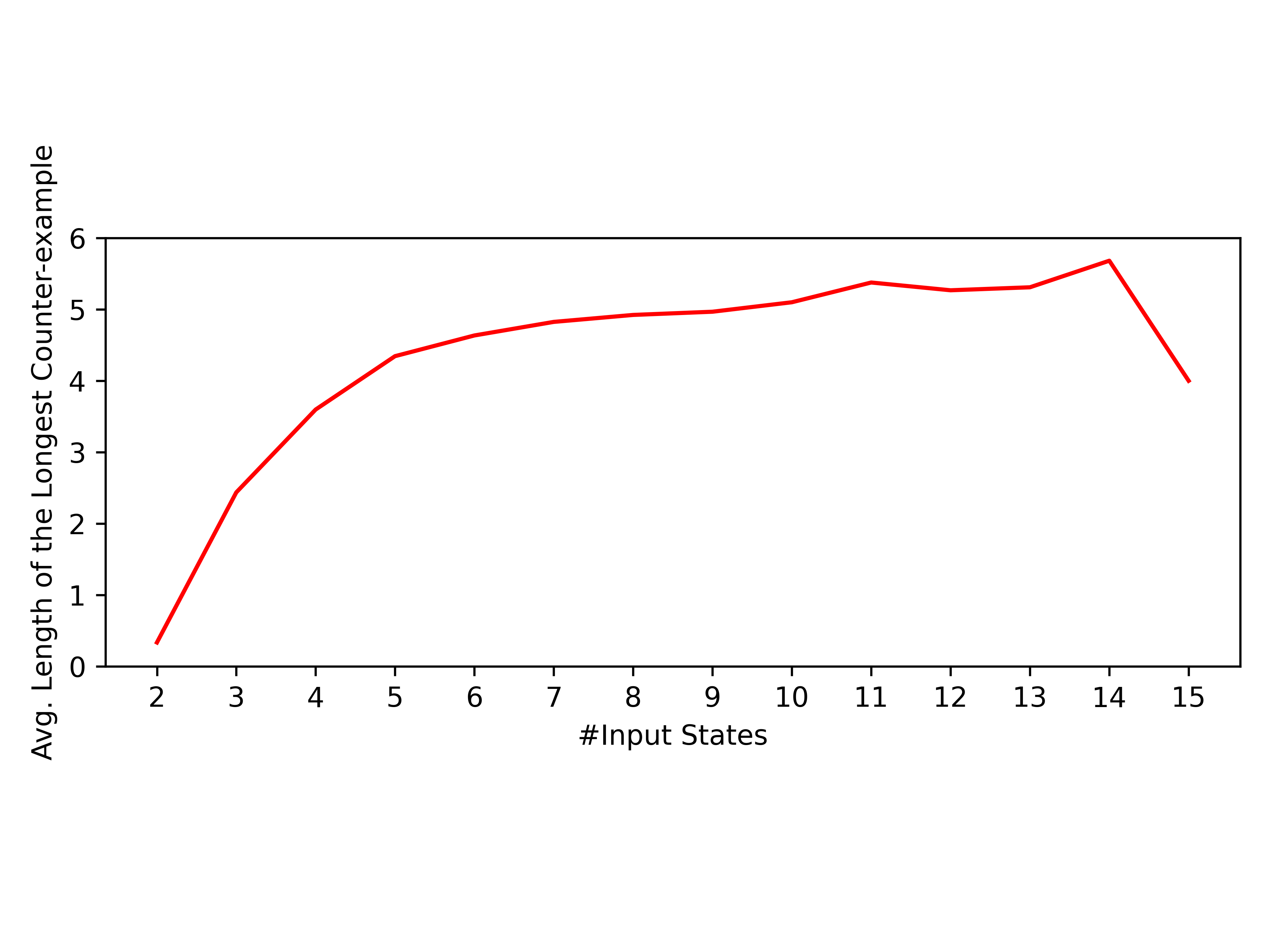}
}
\caption[The average length of the longest counter-example obtained by \minOCA for \dsTwo.]{The average length of the longest counter-example obtained by "\minOCA" for "\dsTwo".}
\label{cePyFig}
\end{minipage}
\centering
\begin{minipage}{.47\textwidth}
\centering
\resizebox{\columnwidth}{!}{%
\includegraphics[trim=0cm 2cm 0cm 0cm, clip=true]{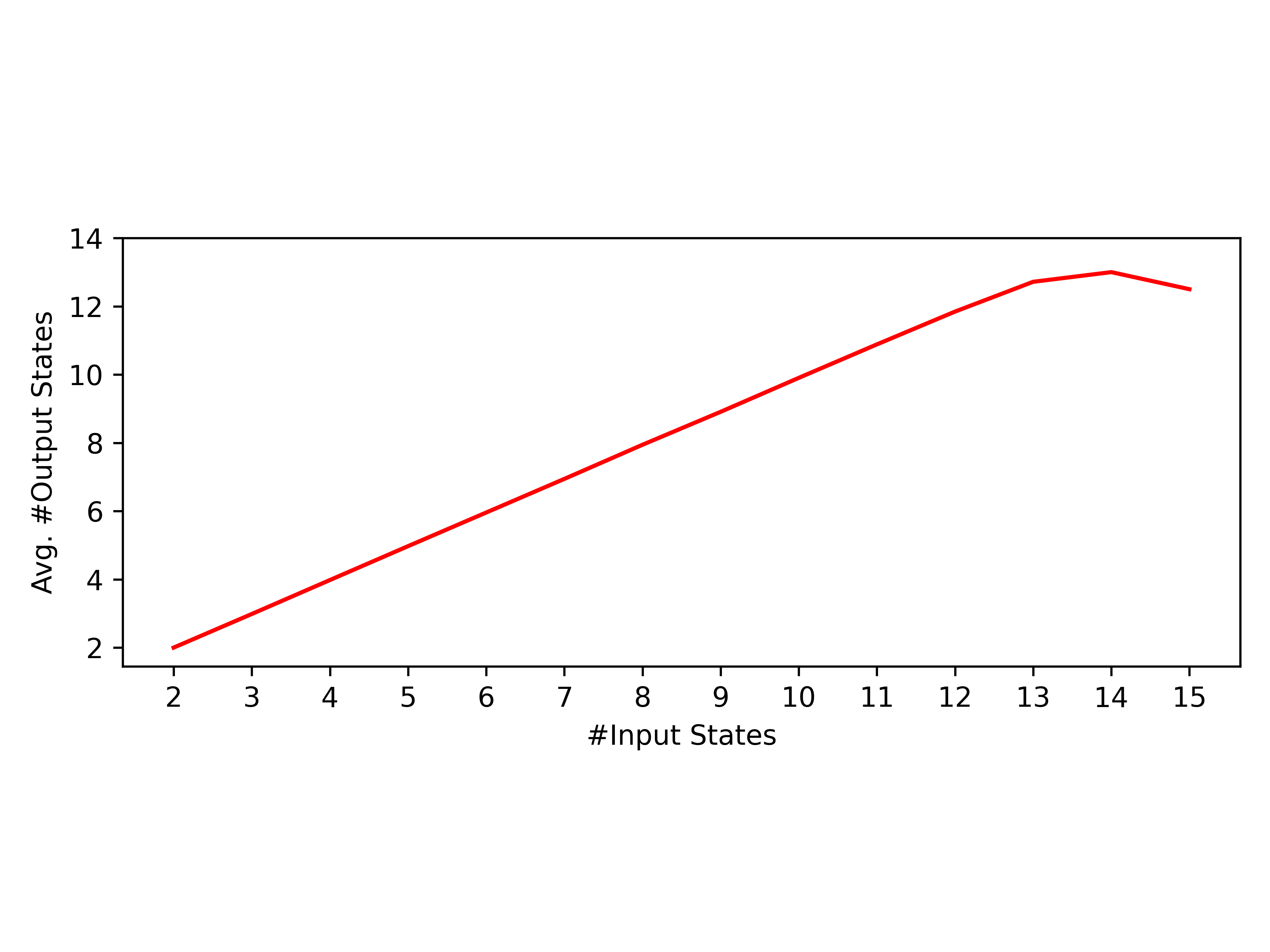}
}
\caption[Average number of states in the learnt \droca obtained by \minOCA for \dsTwo.]{Average number of states in the learnt "\droca" obtained by "\minOCA" for "\dsTwo".}
\label{StatesPyEquiv}
\end{minipage}%
\hfill
\begin{minipage}{.47\textwidth}
\centering
\hspace{5cm}
\resizebox{\columnwidth}{!}{%
\includegraphics[trim=0cm 2cm 0cm 0cm, clip=true]{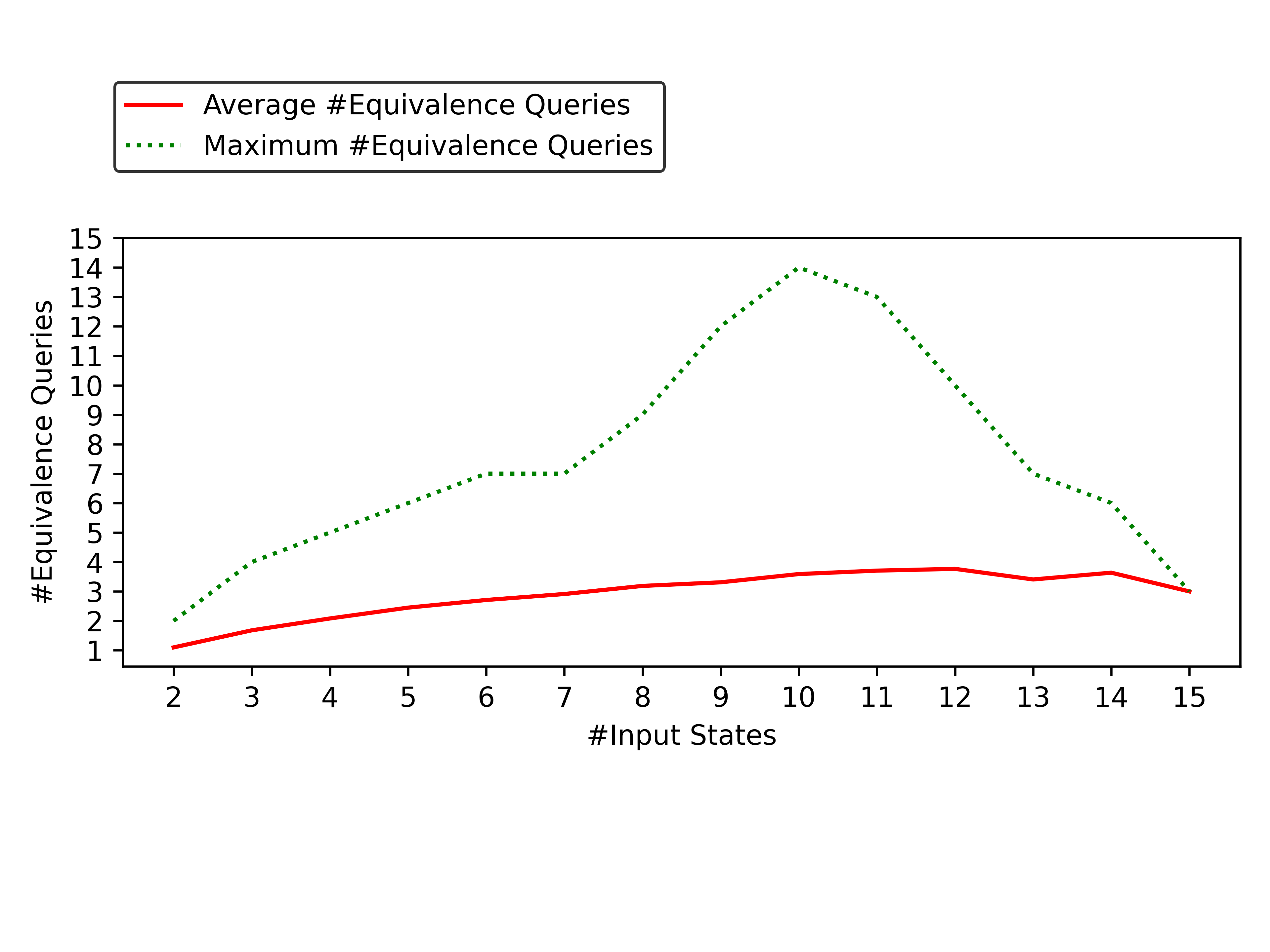}
}
\caption[Average and maximum number of equivalence queries used for learning by \minOCA for \dsTwo.]{Average and maximum number of equivalence queries used for learning by "\minOCA" for "\dsTwo".}
\label{avgPyEquiv}
\end{minipage}
\hfill
\begin{minipage}[t]{\dimexpr.5\textwidth-1em}
\centering
\resizebox{\columnwidth}{!}{%
\includegraphics[trim=0cm 2cm 0cm 0cm, clip=true] {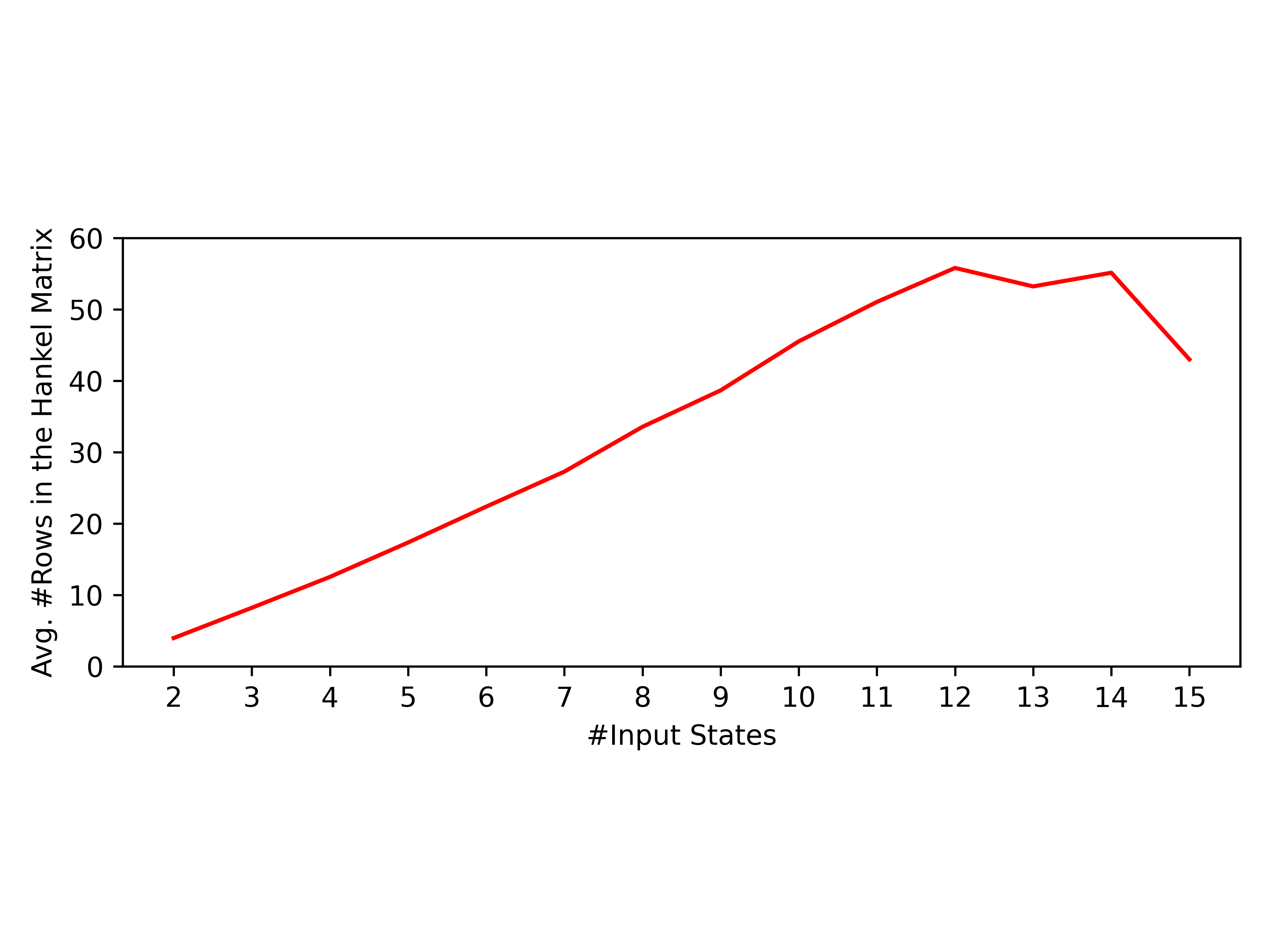}
}
\caption[The average number of rows in the observation table obtained by \minOCA for \dsTwo.]{The average number of rows in the "observation table" obtained by "\minOCA" for "\dsTwo".}
\label{rowsPyFig}
\end{minipage}%
\hfill
\begin{minipage}[t]{\dimexpr.5\textwidth-1em}
\centering
\resizebox{\columnwidth}{!}{%
\includegraphics[trim=0cm 2cm 0cm 0cm, clip=true]{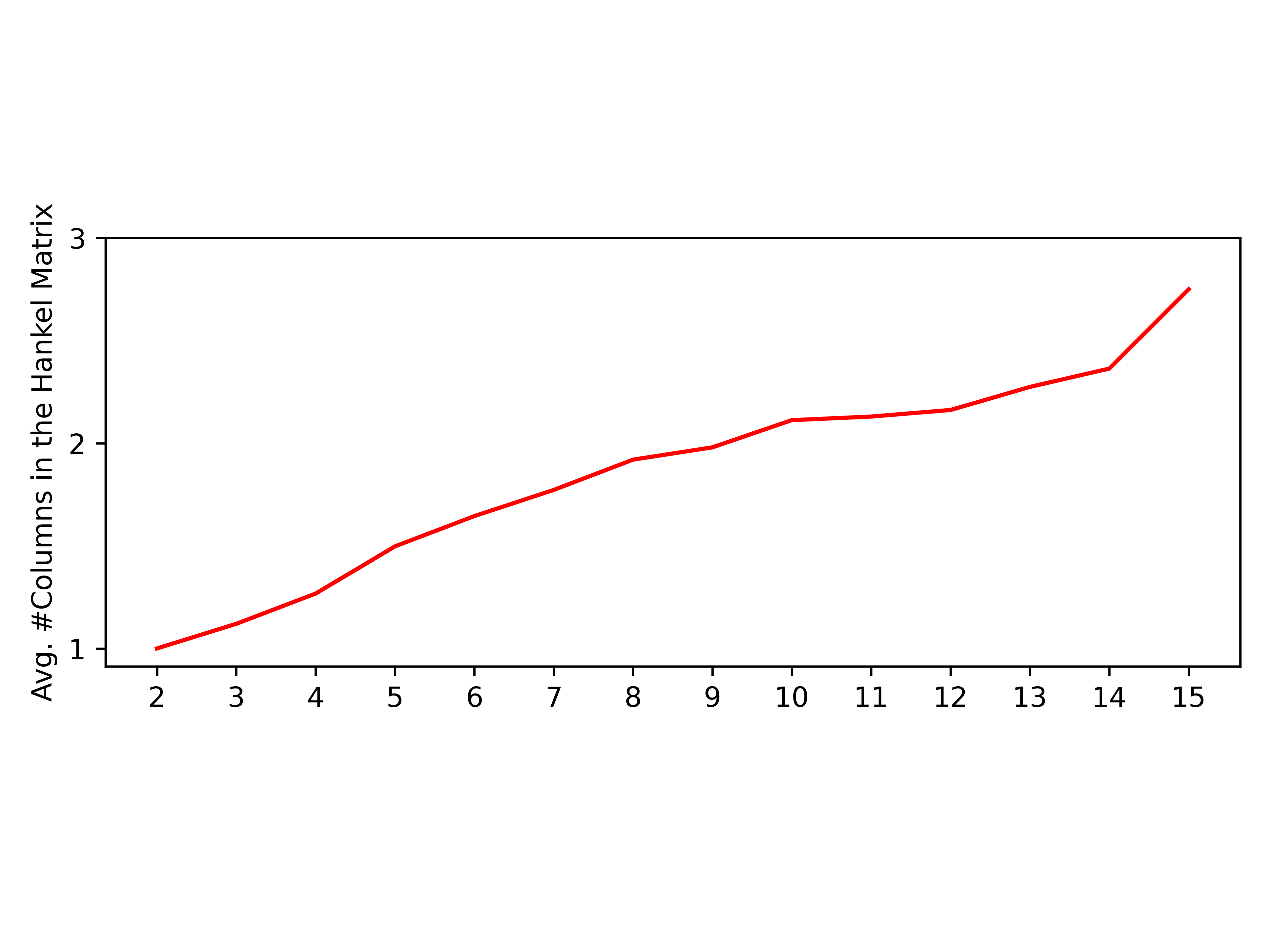}
}
\caption[The average number of columns in the observation table obtained by \minOCA for \dsTwo.]{The average number of columns in the "observation table" obtained by "\minOCA" for "\dsTwo".}
\label{columnPyFig}
\end{minipage}
\end{figure}

\minOCA spends most of the time in finding a "minimal separating \dfa" (see \Cref{SatTime}) using the SAT solver. 
The scalability of our algorithm is hence dependent on the scalability of finding a "minimal separating \dfa".

\begin{table}[h]
\centering\scalebox{1.2}{ 
\begin{tabular}{|c|l|l|l|l|l|l|l|l|l|l|l|l|l|l|}
\hline
\backslashbox{$|\Sigma|$}{$|Q|$} & \multicolumn{1}{c|}{2} & \multicolumn{1}{c|}{3} & \multicolumn{1}{c|}{4} & \multicolumn{1}{c|}{5} & \multicolumn{1}{c|}{6} & \multicolumn{1}{c|}{7} & \multicolumn{1}{c|}{8} & \multicolumn{1}{c|}{9} & \multicolumn{1}{c|}{10} & \multicolumn{1}{c|}{11} & \multicolumn{1}{c|}{12} & \multicolumn{1}{c|}{13} & \multicolumn{1}{c|}{14} & \multicolumn{1}{c|}{15}\\ \Xhline{1pt}
2 & 100 & 100 & 100 & 100 & 100 & 100 & 100 & 100 & 100 & 94 & 77 & 34 & 10 & 3 \\ \hline
3 & 100 & 100 & 100 & 100 & 100 & 100 & 100 & 100 & 100 & 91 & 74 & 20 & 3 & 1 \\ \hline
4 & 100 & 100 & 100 & 100 & 100 & 100 & 100 & 100 & 100 & 100 & 83 & 33 & 4 & 0 \\ \hline
5 & 100 & 100 & 100 & 100 & 100 & 100 & 100 & 100 & 100 & 100 & 87 & 26 & 5 & 0 \\ \hline
\end{tabular}
}
\caption[Number of successfully learnt samples by \minOCA.]{Number of successfully learnt samples (out of 100) by "\minOCA" for "\dsTwo" with the number of states ranging from $2$ to $15$ and the size of the alphabet ranging from $2$ to $5$.}
\label{compTable2} 
\end{table}

\begin{table}[ht]
\centering\scalebox{1.1}{
\begin{tabular}{|c|c|c|c|c|c|c|c|c|c|}
\hline
\textbf{\small \#States} & \textbf{\small Success} & \textbf{\small States} & \textbf{\small LongestCE} & \textbf{\small EqQ} & \textbf{\small MaxEqQ} & \textbf{\small Time} & \textbf{\small SAT} & \textbf{\small Row} & \textbf{\small Col} \\
\hline
2   & 400  & 2.00  & 0.34  & 1.10  & 2.00  & 1.26  & 1.26  & 3.97  & 1.00  \\
3   & 400  & 2.99  & 2.44  & 1.68  & 4.00  & 1.93  & 1.93  & 8.21  & 1.12  \\
4   & 400  & 3.98  & 3.60  & 2.08  & 5.00  & 2.44  & 2.44  & 12.53 & 1.27  \\
5   & 400  & 4.97  & 4.35  & 2.45  & 6.00  & 2.92  & 2.91  & 17.35 & 1.50  \\
6   & 400  & 5.96  & 4.64  & 2.71  & 7.00  & 3.33  & 3.31  & 22.38 & 1.65  \\
7   & 400  & 6.94  & 4.83  & 2.91  & 7.00  & 3.91  & 3.86  & 27.28 & 1.77  \\
8   & 400  & 7.94  & 4.92  & 3.19  & 9.00  & 5.11  & 5.01  & 33.56 & 1.92  \\
9   & 400  & 8.91  & 4.97  & 3.31  & 12.00 & 7.62  & 7.42  & 38.66 & 1.98  \\
10  & 400  & 9.90  & 5.10  & 3.59  & 14.00 & 16.60 & 16.22 & 45.53 & 2.11  \\
11  & 385  & 10.88 & 5.38  & 3.71  & 13.00 & 45.34 & 44.69 & 51.05 & 2.13  \\
12  & 321  & 11.84 & 5.27  & 3.77  & 10.00 & 112.92 & 111.84 & 55.81 & 2.16  \\
13  & 113  & 12.72 & 5.31  & 3.41  & 7.00  & 169.28 & 167.69 & 53.22 & 2.27  \\
14  & 22   & 13.00 & 5.68  & 3.64  & 6.00  & 184.53 & 182.56 & 55.14 & 2.36  \\
15  & 4    & 12.50 & 4.00  & 3.00  & 3.00  & 135.98 & 135.00 & 43.00 & 2.75  \\
\hline
\end{tabular}
}
\captionsetup{width=.65\textheight } 
\caption[Data used to plot Figures \ref{successPyFig}- \ref{columnPyFig}.]{The table shows the data used to plot Figures \ref{successPyFig}- \ref{columnPyFig}. The column \emph{\#States} denotes the number of states in the input "\droca". The columns $Sucess,\ States,\ LongestCE,\ EqQ,\ MaxEqQ,\ Time,\ SAT,\ Row,\ Col$ respectively denote the number of successfully learnt languages, the average number of states in the learnt "\droca", the average length of the longest counter-example, the average number of "equivalence queries" used, the maximum number of "equivalence queries" used, the average time taken for learning, the average time taken by the SAT solver in finding a "minimal separating \dfa", the average number of rows in the final "observation table" and average number of columns in the final "observation table" for "\dsTwo".}
\label{fullData2}
\end{table}

\begin{figure}[!h]
\centering
\scalebox{.6}{%
\includegraphics{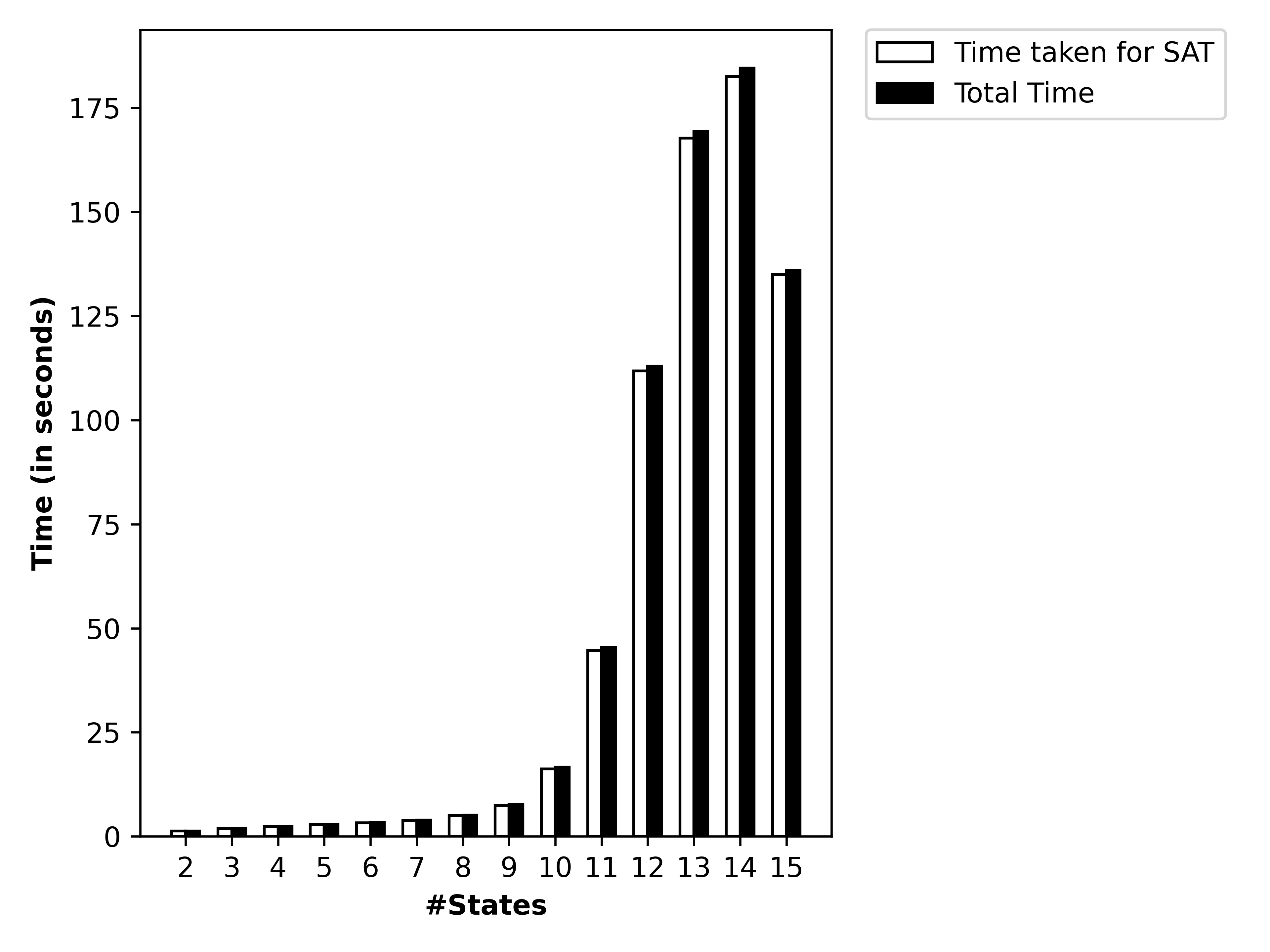}
}
\caption[Total time compared to the time taken by SAT solver.]{\centering Total time taken compared to the time taken by the SAT solver to find a "minimal separating \dfa" when tested on "\dsTwo".}
\label{SatTime}
\end{figure}

\label{compareAut}
\begin{figure}[htbp]
  { 
   \begin{subfigure}[b]{0.24\textwidth}   
   \Large
  \centering\scalebox{.45}{
\begin{tikzpicture}[shorten >=1pt,node distance=3cm,on grid,auto]
\tikzset{every path/.style={line width=.6mm}}\
\node[state] at (.2,12) (q_0) {$q_0$};
\node[state] at (2,9.5) (q_1) {$q_1$};
\node[state,accepting] at (4.5,5)(q_3){$q_3$};
\node[state] at (2,2)(q_2){$q_2$};
\path[->]
(q_0) edge [loop right] node [yshift=.2cm]{$a_{=0}/{\small+1}$} ()
edge [bend right, below] node[xshift=-.9cm,yshift=.3cm]  {$b_{=0}/{\small+1}$} (q_2)
edge [above] node[xshift=.7cm,yshift=-.2cm] {$b_{>0}/{\small-1}$} (q_1)
(q_1) edge [loop right] node {$b_{>0}/{\small-1}$} ()
edge [right] node[xshift=-1.8cm, yshift=0cm] {$a_{>0}/{\small+1}\ b_{=0}/{\small+1}$} (q_2)
edge [above] node[xshift=.5cm, yshift=0cm] {$a_{=0}/{\small0}$} (q_3)
(q_2) edge [loop left] node[xshift=-.1cm, yshift=.2cm]{$a_{>0}/{\small+1}$} ()
(q_3) edge [right] node[yshift=.3cm,xshift=.1cm]{$a_{=0}/{\small+1}$} (q_2);
\draw (2.2,11.8) node[xshift=.3cm] {$a_{>0}/{\small+1}$};
\draw (-.5,1.7) node {$b_{>0}/{\small+1}$};
\draw (4,3.3) node{$b_{=0}/{\small+}1$};
\draw[<-] (-.33,12)-- (-.85,12);
\end{tikzpicture}}
\caption{\centering Input "\droca".}
\label{anbna}
\end{subfigure}
}
\hfill
{
   \begin{subfigure}[b]{0.35\textwidth}   
   \Large
  \centering\scalebox{.45}{
\begin{tikzpicture}[shorten >=1pt,node distance=3cm,on grid,auto]
\tikzset{every path/.style={line width=.6mm}}\
\node[state] at (0,12) (q_1) {$q_1$};
\node[state] at (1.8,8) (q_3) {$q_3$};
\node[state,accepting] at (4,5)(q_2){$q_2$};
\node[state] at (3,16)(q_0){$q_0$};
\path[->]
(q_0) edge [loop right] node [yshift=.3cm]{$a_{>0}/{\small+1}$} ()
edge [bend right, below] node[xshift=-1.1cm,yshift=.3cm]  {$a_{=0}/{\small+1}$} (q_1)
(q_1) edge [loop left] node[xshift=1cm, yshift=.5cm] {$a_{>0}/{\small+1}$} ()
edge [right] node[xshift=-1.8cm, yshift=-.2cm] {$b_{>0}/{\small-1}$} (q_3)
(q_2) edge [bend right, left] node[xshift=1.6cm,yshift=-3.5cm]{$a_{=0}/{\small+1}\ \ b_{=0}/{\small+}1$} (q_0)
(q_3) edge [loop left] node {$b_{>0}/{\small-1}$} ()
edge [right] node[xshift=-1.5cm, yshift=-.2cm] {$a_{=0}/{\small0}$} (q_2)
edge [below] node [xshift=.1cm,yshift=.4cm]{$a_{>0}/{\small+1}\ \ b_{=0}/{\small+}1$} (q_0);
\draw (5.4,15.9) node {$b_{=0}/{\small+1}$};
\draw (5.4,15.5) node {$b_{>0}/{\small+1}$};
\draw [gray!20] (-3,12)-- (-3,12) ;
\draw[<-] (2.33,16)-- (1.85,16);
\end{tikzpicture}}
\caption{\centering Learnt by "\minOCA".}
\label{anbnaPy}
\end{subfigure}
}
\hfill
{
   \begin{subfigure}[b]{0.35\textwidth}   
  \centering\scalebox{.5}{
\begin{tikzpicture}[shorten >=1pt,node distance=3cm,on grid,auto]
\tikzset{every path/.style={line width=.6mm}}\
\node[state] at (6,14) (L_0) {$L_0$};
\node[state] at (3,14) (L_6) {$L_6$};
\node[state] at (4,12)(L_7){$L_7$};
\node[state] at (5,10)(L_1){$L_1$};
\node[state] at (3,10)(L_3){$L_3$};
\node[state] at (3,6.5)(L_4){$L_4$};
\node[state] at (9,10)(L_5){$L_5$};
\node[state] at (7.5,8)(L_9){$L_9$};
\node[state] at (7,4)(L_8){$L_8$};
\node[state,accepting] at (9,6)(L_{10}){$L_{10}$};
\node[state] at (2,2)(L_2){$L_2$};
\path[->]
(L_0) edge [above] node {$a_{=0}/{\small0}$} (L_6)
(L_6) edge [right] node [xshift=-1cm, yshift=-.2cm]{$a_{=0}/{\small0}$} (L_7)
edge [bend left, right] node [xshift=.1cm]{$b_{=0}/{\small0}$} (L_5)
edge [bend left, right] node [xshift=0cm]{$b_{>0}/{\small-1}$} (L_9)
(L_7) edge [right] node{$b_{>0}/{\small0}$}(L_1)
edge [bend right,left] node[xshift=1.1cm]{$a_{=0}/{\small0}$}(L_3)
(L_1) edge [right] node[xshift=-.75cm, yshift=-.55cm]{$b_{>0}/{\small-1}$}(L_9)
edge [right] node[yshift=.2cm,xshift=.3cm]{$b_{=0}/{\small0}$}(L_5)
(L_5) edge [right] node {$a_{=0}/{\small0}$} (L_{10})
(L_3) edge [bend left,left] node [xshift=1.1cm]{$a_{=0}/{\small0}\ \ a_{>0}/{\small0}$} (L_4)
edge [bend right,left] node [xshift=1.3cm, yshift=-2cm]{$b_{=0}/{\small0}\ \ \ b_{>0}/{\small0}$} (L_2)
(L_2) edge [bend right,left] node [xshift=-.2cm, yshift=-2.2cm]{$b_{=0}/{\small0}\ \ \ b_{>0}/{\small0}$} (L_1)
(L_9) edge [left] node [xshift=1cm]{$b_{=0}/{\small0}\ \ b_{>0}/{\small0}$} (L_8)
(L_4) edge [left] node [xshift=1.2cm,yshift=-.9cm]{$b_{>0}/{\small0}$} (L_8)
edge [bend left] node {$a_{=0}/{\small+1}$} (L_6)
(L_8) edge [bend left] node [yshift=1.1cm, xshift=.2cm]{$b_{=0}/{\small0}\ \ \ b_{>0}/{\small0}$} (L_2);
\draw (1.1,9.9) node {$a_{>0}/{\small+1}$};
\draw (5.9,5.2) node {$b_{=0}/{\small0}$};
\draw (3.1,12.4) node {$a_{>0}/{\small0}$};
\draw (4.9,11.4) node {$b_{=0}/{\small0}$};
\draw (3.5,10.8) node {$a_{>0}/{\small0}$};
\draw[<-] (6.56,14)-- (7.15,14);
\end{tikzpicture}}
\caption{\centering Learnt by "\bps".}
\label{anbnaJav}
\end{subfigure}
}
\centering
\caption{The input "\droca" recognises the language $\{a^nb^na \mid n>0\}$. }
\label{ex1}
\end{figure}

\begin{figure}
  {
     \begin{subfigure}[b]{0.3\textwidth}   
  \centering\scalebox{.7}{
\begin{tikzpicture}[shorten >=1pt,node distance=3cm,on grid,auto]
\tikzset{every path/.style={line width=.4mm}}\
\node[state] at (0,6) (q_0) {$q_0$};
\node[state, accepting] at (-1.5,4.5) (q_1) {$q_1$};
\node[state] at (-3,3)(q_2){$q_2$};
\node[state] at (-3,1)(q_3){$q_3$};
\node[state] at (2 ,1)(q_4){$q_4$};
\path[->]
(q_0) edge [bend left, right] node[xshift=0cm,yshift=.3cm]  {$b_{=0}/{\small0}$} (q_4)
edge [above] node[xshift=-.4cm,yshift=-.15cm] {$a_{=0}/{\small0}$} (q_1)
(q_1) edge [loop right] node {$a_{=0}/{\small0}$} ()
edge [left] node[xshift=.1cm, yshift=.2cm] {$b_{=0}/{\small+1}$} (q_2)
(q_2) edge [loop right] node[xshift=0cm, yshift=.2cm]{$a_{>0}/{\small-1}$} ()
edge [below] node [yshift=.3cm, xshift=-.1cm]{$a_{=0}/{\small+1}\ \ b_{=0}/{\small0}$} (q_3)
(q_3) edge [below] node [yshift=.6cm]{$a_{=0}/{\small+1}\ \ a_{>0}/{\small-1}$} (q_4)
(q_4) edge [loop below] node[yshift=-.1cm]{$a_{>0}/{\small+1}$}();
\draw (-1.3,2.8) node {$b_{>0}/{\small-1}$};
\draw (2,-.7) node {$b_{=0}/{\small+1}$};
\draw (2,-1) node {$b_{>0}/{\small+1}$};
\draw (2,-.1) node {$a_{=0}/{\small+1}$};
\draw (2,-.1) node {$a_{=0}/{\small+1}$};
\draw (-.5,.6) node {$b_{=0}/{\small+1}\ \ b_{>0}/{\small-1}$};
\draw[<-] (-.53,6)-- (-1.05,6);
\end{tikzpicture}}
\caption{\centering Input "\droca".}
\label{inputdroca}
\end{subfigure}
}
\hfill
{
   \begin{subfigure}[b]{0.4\textwidth}   
  \centering\scalebox{.7}{
\begin{tikzpicture}[shorten >=1pt,node distance=3cm,on grid,auto]
\tikzset{every path/.style={line width=.4mm}}\
\node[state, accepting] at (0,9) (q_1) {$q_1$};
\node[state] at (-2,6) (q_3) {$q_3$};
\node[state] at (2.5,3)(q_2){$q_2$};
\node[state] at (3,10)(q_0){$q_0$};
\path[->]
(q_0) edge [loop right] node [yshift=.1cm]{$a_{>0}/{\small+1}$} ()
edge [bend right, below] node[xshift=-.6cm,yshift=.3cm]  {$b_{=0}/{\small0}$} (q_2)
edge [above] node[xshift=.1cm,yshift=.2cm] {$a_{=0}/{\small0}$} (q_1)
(q_1) edge [loop right,below] node {$a_{=0}/{\small0}$} ()
edge [above] node[xshift=-.5cm, yshift=0cm] {$b_{=0}/{\small+1}$} (q_3)
(q_2) edge [loop right] node[yshift=.1cm]{$a_{>0}/{\small-1}$} ()
edge [bend right, below] node[xshift=0cm,yshift=.3cm]  {$a_{=0}/{\small+1}\ \ b_{=0}/{\small+1}$} (q_0)
(q_3) edge [below] node [xshift=-.3cm, yshift=.4cm] {$a_{=0}/{\small+1}\ \ \  \ b_{=0}/{\small0}$} (q_2)
	edge [loop right] node [yshift=.1cm]{$a_{>0}/{\small-1}$} ();
\draw (-.4,5.8) node {$b_{>0}/{\small-1}$};
\draw (4.6,9.8) node {$b_{>0}/{\small+1}$};
\draw (4.1,2.8) node {$b_{>0}/{\small-1}$};
\draw[->] (3,10.9)-- (3,10.4);
\draw [gray!20] (-3,6)-- (-3,6) ;
\end{tikzpicture}}
\caption{\centering Learnt by "\minOCA".}
\label{newTestPython}
\end{subfigure}
}
\hfill
{
   \begin{subfigure}[b]{0.1\textwidth}   
  \centering\scalebox{.7}{
\begin{tikzpicture}[shorten >=1pt,node distance=3cm,on grid,auto]
\tikzset{every path/.style={line width=.4mm}}\
\node[state] at (0,5) (q_0) {$L_0$};
\node[state,accepting] at (0,0) (q_1) {$L_1$};
\path[->]
(q_0) edge [right] node {$a_{=0}/{\small0}$} (q_1)
(q_1) edge [loop right] node {$a_{=0}/{\small0}$} ();
\draw[->] (0,6)-- (0,5.5);
\end{tikzpicture}}
\caption{\centering Learnt by "\bps".}
\label{newTestJava}
\end{subfigure}
}
\caption{The input "\droca" recognises $\{w\in\{a,b\}^*\mid w \text{ does not contain a }b\}$.}
\label{ex2}
\end{figure}

\Cref{ex1,ex2}, give two examples of "\drocas" learnt by "\minOCA" and "\bps". In both the examples, "\minOCA" learns a minimal "counter-synchronised" "\droca" "equivalent" to the input. However, the "\droca" learnt by "\bps" is "equivalent" but is not "counter-synchronous" with respect to the input. It is also not complete. 

\clearpage
\section{Conclusion}
\label{sec:conclusion}
 In this paper, we presented a novel approach for active learning of "\drocas". We showed that a "\drocas" can be learnt using polynomially many queries with the help of a SAT solver. This is in contrast to the existing techniques that require exponentially many queries. Our algorithm learns a minimal "counter-synchronous" "\droca", which results in significantly smaller counter-examples on equivalence queries.  
  Additionally, the development of a specialised equivalence-checking algorithm for "counter-synchronous" "\drocas", with a time complexity of $\mathcal{O}(n^5)$, further contributes to the feasibility of our approach. For "\vocas", we optimised this equivalence-checking process to $\mathcal{O}(n^3)$, enabling efficient learning. 
We evaluated our algorithm against randomly generated "\drocas". 
The results indicate that our method significantly outperforms the existing method by Bruyère et al.~\cite{gaetan}. 

In future work, the proposed algorithm can be improved by finding better methods for identifying the minimal separating \dfa. The algorithm can also be applied to learn "\vocas", and the scalability of this approach warrants further investigation. Additionally, extending these ideas to learn more general models, such as visibly pushdown automata, represents a valuable direction for further research.
Another open problem is to determine whether active learning of "\drocas" can be done in polynomial time. This problem is open, even in the case of "\vocas". However, learning a minimal "\voca" cannot be done in polynomial time unless $\CF{P=NP}$. Exploring the possibility of finding a polynomial-time algorithm to obtain a polynomial approximation is worthy of further study.

\begin{credits}
\subsubsection{\ackname} {A.V. Sreejith would like to acknowledge the support by SERB for the project ``Probabilistic Pushdown Automata'' [MTR/2021/000788].}
\end{credits}

\addcontentsline{toc}{chapter}{Bibliography}
{\small
\bibliography{paperNew}
}

\end{document}